%%%%%%%%%%%%%%%%%%%%%% 8-page paper on sequential learning methodology using %%%%%%%%%%%%%
%%%%%%%%%%%%%%%%%%%%%% recursive co-supervision; mainly audio related applications %%%%%%%

\documentclass{article}

\usepackage{microtype, subfigure, verbatim}
\usepackage{booktabs, hyperref}
\usepackage{amssymb, amsthm, graphicx, amsmath,multirow}
\usepackage{algorithm, algorithmic, wrapfig}
\usepackage[inline]{enumitem}
\usepackage[small]{caption}

\usepackage[accepted]{icml2020}

\def\x{{\mathbf x}}  
\def\y{{\mathbf y}} \def\z{{\mathbf z}} \def\by{{\bar{\mathbf y}}} \def\bT{{\bar{T}}} 
  \def\tt{{\tilde t}}
\def\L{{\cal L}} \def\D{{\cal D}}
\def\p{{\mathbf p}} \def\hp{{\hat{\mathbf p}}} \def\l{{\ell}}
\def\N{{\mathcal N}}

\newtheorem{proposition}{Proposition}
\newtheorem{corollary}{Corollary}

%%%%%%%%%%%%%%%%%%%%%%%%%%%%%%%%%%%%%%%%%%%%%%%%%%%%%%%%%%%%%%%%%%%%%%%%%%%%%%%%%%%%%%%%%%%

\icmltitlerunning{Sequential Self Teaching for Learning Sounds}

\begin{document}

\twocolumn[
\icmltitle{A Sequential Self Teaching Approach for \\Improving Generalization in Sound Event Recognition}

\begin{icmlauthorlist}
\icmlauthor{Anurag Kumar}{frl}
\icmlauthor{Vamsi  Krishna Ithapu}{frl}
\end{icmlauthorlist}

\icmlaffiliation{frl}{Facebook Reality Labs, Redmond, USA}
\icmlcorrespondingauthor{Anurag Kumar}{anuragkr@fb.com}

\icmlkeywords{Sequential Learning, Noisy Labels, Audio Event Detection}

\vskip 0.3in
]
\printAffiliationsAndNotice{}

%%%%%%%%%%%%%%%%%%%%%%%%%%%%%%%%%%%%%%%%%%%%%%%%%%%%%%%%%%%%%%%%%%%%%%%%%%%%%%%%%%%%%%%%

\begin{abstract}
An important problem in machine auditory perception is to recognize and detect sound events. In this paper, we propose a sequential self-teaching approach to learning sounds. Our main proposition is that it is harder to learn sounds in adverse situations such as from weakly labeled and/or noisy labeled data,  and in these situations a single stage of learning is not sufficient. Our proposal is a sequential stage-wise learning process that improves generalization capabilities of a given modeling system. We justify this method via technical results and on Audioset, the largest sound events dataset, our sequential learning approach can lead to up to 9\% improvement in performance. A comprehensive evaluation also shows that the method leads to improved transferability of knowledge from previously trained models, thereby leading to improved generalization capabilities on transfer learning tasks. 
\end{abstract}

%%%%%%%%%%%%%%%%%%%%%%%%%%%%%%%%% Introduction %%%%%%%%%%%%%%%%%%%%%%%%%%%%%%%%%%%%%%%%%

\section{Introduction} 
\label{sec:intro}

Human interaction with the environment is driven by multi-sensory perception. 
Sounds and sound events, natural or otherwise, play a vital role in this first person interaction. 
To that end, it is  imperative that we build acoustically intelligent devices and systems which can {\it recognize } and {\it understand} sounds. 
Although this aspect has been identified to an extent, 
and the field of Sound Event Recognition and detection (SER) is at least a couple of decades old~\cite{xiong2003audio,atrey2006audio}, 
much of the progress has been in the last few years~\cite{virtanen2018computational}. 
Similar to related research domains in machine perception, like speech recognition, 
most of the early works in SER were fully supervised and driven by \emph{strongly labeled data}. 
Here, audio recordings were carefully (and meticulously) annotated with time stamps of sound events to produce exemplars.
These exemplars then drive the training modules in supervised learning methods. 
Clearly, obtaining well annotated strongly labeled data is prohibitively expensive and cannot be scaled in practice. 
Hence, much of the recent progress on SER has focused on efficiently leveraging \emph{weakly labeled} data \cite{kumar2016audio}. 

Weakly labeled audio recordings are only tagged with presence or absence of sounds (i.e.,  a binary label), 
and no temporal information about the event is provided. 
Although this has played a crucial role in scaling SER, 
large scale learning of sounds remains a challenging and open problem. 
This is mainly because, even in the presence of strong labels, large scale SER brings adverse learning conditions into the picture, 
either implicitly by design or explicitly because of the sheer number and variety of classes. 
This becomes more critical when we replace strong labels with weak labels.  
Tagging ({\it a.k.a.} weak labeling) very large number of sound categories in large number of recordings, 
often leads to considerable label noise in the training data. This is expected. 
Implicit noise via human annotation errors is clearly one of the primary factors contributing to this.
\emph{Audioset}~\cite{gemmeke2017audio}, currently the largest sound event dataset, suffers from this implicit label noise issue.
Correcting for this implicit noise is naturally very expensive (one has to perform multiple re-labeling of the same dataset). 
Beyond this, there are more nuanced noise inducing attributes, which are outcomes of the number and variance of the classes themselves. 
For instance, real world sound events often overlap and as we increase the sound vocabulary and audio data size, 
the ``mass'' of overlapping audio in the training data can become large enough to start affecting the learning process. 
This is trickier to address in weakly labeled data where temporal locations of events are not available. 

Lastly, when working with large real world datasets, one cannot avoid the noise in the inputs themselves.
For SER these manifest either via signal corruption in the audio snippets themselves (i.e., acoustic noise), 
or signals from non-target sound events, both of which will interfere in the learning process. 
In weakly labeled setting, by definition, this noise level would be high, presenting harsher learning space for networks. 
We need efficient SER methods that are sufficiently robust to the above three adverse learning conditions. 
In this work, we present an interesting take on large scale weakly supervised learning for sound events.
Although we focus on SER in this work, we expect that 
the proposed framework is applicable for any supervised learning task.

The main idea behind our proposed framework is motivated by the attributes of human learning, 
and how humans adapt and learn when solving new tasks. 
An important characteristic of human's ability to learn is that it is {\it not} a one-shot learning process, 
i.e., in general, we do not learn to solve a task in the first attempt. 
Our learning typically involves multiple stages of development where past experiences, and past failures or successes, 
``guide'' the learning process at any given time. 
This idea of sequential learning in humans wherein each stage of learning is guided by previous stage(s) was referred to as \emph{sequence of teaching selves} in \cite{minsky1994society}. 
Our proposal follows this meta principle of sequential learning, and at the core, it involves the concept of learning over time. 
Observe that this learning over time is rather different from, for instance, 
learning over iterations or epochs in stochastic gradients, and we make this distinction clear as we present our model. 
We also note that the notion of lifelong learning in humans, which has inspired \emph{lifelong machine learning}~\cite{silver2013lifelong,parisi2019continual}, is also, in principle, related to our framework. 

Our proposed framework is called SeqUential Self TeAchINg (SUSTAIN). 
We train a sequence of neural networks (designed for weakly labeled audio data) wherein 
the network at the current stage is guided by trained network(s) from the previous stage(s). 
The guidance from networks in previous stages comes in the form of ``co-supervision''; 
i.e., the current stage network is trained using a \emph{convex combination} of ground truth labels and the outputs from one or more networks from the previous stages. 
Clearly, this leads to a cascade of {\it teacher-student} networks. 
The student network trained in the current stage will become a teacher in the future stages. 
We note that this is also related to the recent work on knowledge distillation through teacher-student frameworks \cite{hinton2015distilling, ba2014deep, bucilua2006model}. 
However, unlike these, our aim is not to construct a smaller, compressed, model that emulates the performance of high capacity teacher. 
Instead, our SUSTAIN framework's goal is to simply utilize the teacher`s knowledge better. 

Specifically, the student network tries to correct the mistakes of the teachers, 
and this happens over multiple sequential stages of training and co-supervision with the aim of building better models as time progresses. 
We show that one can quantify the performance improvement, 
by explicitly controlling the transfer of knowledge from teacher to student over successive stages.

The {\bf contributions} of this work include: 
\begin{enumerate*}[label=\textbf{(\alph*)}]
\item A sequential self-teaching framework based on co-supervision for improving learning over time, 
including few technical results characterizing the limits of this improved learnability;  
\item A novel CNN for large scale weakly labeled SER, and
\item Extensive evaluations of the framework showing up to $9\%$ performance 
improvement on Audioset,  significantly outperforming existing procedures, and applicability to knowledge transfer. 
\end{enumerate*}

The rest of the paper is organized as follows. We discuss some related work in Section \ref{sec:rwork}.  In Section \ref{sec:sustain}, we introduce the sequential self-teaching framework and then discuss few technical results. 
In Section \ref{sec:weakcnn}, we describe our novel CNN architecture for SER which learns from weakly labeled audio data. 
Sections \ref{sec:expt} and \ref{sec:transfer} show our experimental results, and we conclude in Section \ref{sec:conc}. 

%%%%%%%%%%%%%%%%%%%%%%%%%%%%%%% Related Work %%%%%%%%%%%%%%%%%%%%%%%%%%%%%%%%%%%%%%%%

\section{Related Work} \label{sec:rwork}

While earlier works on SER were primarily small scale~\cite{couvreur1998automatic}, large scale SER has received considerable attention in the last few years. 
The possibility of learning from weakly labeled data~\cite{kumar2016audio, su2017weakly} is the primary driver here, 
including availability of large scale weakly labeled datasets later on, like Audioset ~\cite{gemmeke2017audio}. 
Several methods have been proposed for weakly labeled SER; \cite{kumar2017knowledge,kong2019weakly,chou2018learning,mcfee2018adaptive,yu2018multi,wang2018comparing,adavanne2017sound} to name a few. 
Most of these works employ deep convolutional neural networks (CNN). 
The inputs to CNNs are often time-frequency representations such as spectrograms, logmel spectrograms, constant-q spectrograms~\cite{zhang2015robust,kumar2017knowledge,ye2015acoustic}. 
Specifically, with respect to Audioset, some prior works, for example ~\cite{kong2019weakly}, have used features from a pre-trained network, trained on a massive amount of YouTube data~\cite{hershey2017cnn} for instance.

The weak label component of the learning process was earlier handled via $mean$ or $max$ global pooling~\cite{su2017weakly,kumar2017knowledge}. 
Recently, several authors proposed to use attention~\cite{kong2019weakly,wang2018comparing,chen2018class}, recurrent neural networks~\cite{adavanne2017sound}, adaptive pooling~\cite{mcfee2018adaptive}. 
Some works have tried to understand adverse learning conditions in weakly supervised learning of sounds~\cite{shah2018closer, kumar2019learning}, although it still is an open problem. 
Recently, problems related to learning from noisy labels have been included in the annual DCASE challenge on sound event classification~\cite{fonseca2019audio} \footnote{http://dcase.community/challenge2019/}. 

Sequential learning, and more generally, learning over time, 
is being actively studied recently~\cite{parisi2019continual}, 
starting from the seminal work~\cite{minsky1994society}. 
Building cascades of models has also been tied to lifelong learning~\cite{silver2013lifelong,ruvolo2013ella}.
Further, several authors have looked at the teacher-student paradigm in a variety of contexts including 
knowledge distillation~\cite{hinton2015distilling,furlanello2018born,chen2017learning,mirzadeh2019improved}, compression~\cite{polino2018model} and transfer learning~\cite{Yim_2017_CVPR,weinshall2018curriculum}. \cite{furlanello2018born} in particular show that it is possible to sequentially distill knowledge from neural networks and improve performance. Our work builds on top of \cite{kumar2020secost}, and proposes to learn a sequence of self-teacher(s) to improve generalizability  in adverse learning conditions. This is done by co-supervising the network training along with available labels and controlling the knowledge transfer from the teacher to the student. %We show improvements in generalizability (both empirically and through technical results) of learned models in these adverse learning conditions. 

%%%%%%%%%%%%%%%%%%%%%%%%%%%%%%%%% Some Theory %%%%%%%%%%%%%%%%%%%%%%%%%%%%%%%%%%%%%%%%%

\section{Sequential Self-Teaching (SUSTAIN)} \label{sec:sustain}

\subsection{SUSTAIN Framework} \label{sec:framework}

\paragraph{Notation} \label{para:notation}: 
Let $\D := \{ \x^s, \y^s\}$ ($s=1,\ldots,S$) denote the dataset we want to learn with $S$ training pairs. $\x^s$ are the inputs to the learning algorithms and  $\y^s \in \{0,1\}^C$ are the desired outputs. $C$ is the number of classes.
$y^s_c = 1$ indicates the presence of $c^{th}$ class in the input $\x^s$. Note that $\y^s_c \;\forall\; c$ are the observed labels and may have noise. 

For the rest of the paper, we restrict ourselves to the binary cross-entropy loss function. However, in general, the method is applicable to other loss functions as well, such as mean squared error loss. 

If $\p^s = [p^s_1,\ldots,p^s_C]$ is the predicted output, then the loss is
\begin{align} \label{eq:crossent}
\L(\p^s, \y^s) &= \frac{1}{C}\sum_{c=1}^{C} \l(p^s_c, y^s_c) \hspace{2mm} \text{where} \\ 
\l(p^s_c, y^s_c) &= - y^s_c \log(p^s_c) - (1-y^s_c) \log(1-p^s_c)
\end{align}

With this notation, we will now formalize the ideas motivated in Section \ref{sec:intro}.
The learning process entails $T$ stages indexed by $t = 0,\ldots,T$. 
The goal is to train a cascade of learning models denoted by $\N^0,\ldots,\N^T$ at each stage. 
The final model of interest is $\N^T$. 
Zeroth stage serves as an {\it initialization} for this cascade. 
It is the default teacher that learns from the available labels $\y^s$. Once $\N^0$ is trained, we can get the predictions $\hp^s_0 \; \forall \; s$ (note the $\hat{\cdot}$ here). 

The learning in each of the later stages is co-supervised by the already trained network(s) from previous stages, 
i.e., at $t^{th}$ stage, $\mathcal{N}^0, \dots\mathcal{N}^{t-1}$ guide the training of $\N^t$. 
This guidance is done via replacing the original labels ($\y^s$) with a convex combination of the predictions from the teacher network(s) and $\y^s$, which will be the new targets for training $\N^t$. 
In the most general case, if all networks from previous stages are used for teaching, the new target at $t^{th}$ stage is, 
\begin{equation} \label{eq:new-labels}
\by_t^s = \alpha_0 \y^s + \sum_{\tt=1}^t \alpha_\tt \hp_{\tt-1}^s \quad \text{\it s.t.} \quad \sum_{\tt=0}^{t} \alpha_\tt = 1 
\end{equation}

More practically, the network from only last stage will be used, in which case, 
\begin{equation} \label{eq:new-labels-1-teacher}
\by_t^s = \alpha_0 \y^s + (1 - \alpha_0) \hp_{t-1}^s    
\end{equation}
or the students from previous $m$ stages will co-supervise the learning at stage $t$, which will lead to $\by_t^s = \alpha_0 \y^s + \sum_{\tt=1}^m \alpha_\tt \hp_{t - \tt}^s $, $\,s.t\,$ $ \sum_{\tt=0}^{m} \alpha_\tt = 1 $. 

Algorithm \ref{alg:secost} summarizes this self teaching approach driven by co-supervision with single teacher per stage. It is easy to extend it to $m$ teachers per stage, driven by appropriately chosen $\alpha$`s. 

\setlength{\textfloatsep}{5pt}
\begin{algorithm}[t!]
	\caption{\textbf{SUSTAIN}: Single Teacher Per Stage}
	\begin{algorithmic}[1]
		\REQUIRE: $\D$, $\#$stages $T$, \{$\alpha_t$, $t = 0,\ldots,T-1$ \} \\
		\ENSURE: Trained Network $\N^T$ after $T$ stages
		\STATE Train default teacher $\N^0$ using $\D := \{ \x^s, \y^s\}$ $\forall \; s$ 
		\FOR{$t = 1,\ldots,T$}
		\STATE Compute new target $\by_t^s$ ($\forall s$) using Eq. \ref{eq:new-labels-1-teacher}% (single teacher)
		\STATE Train $\N^t$ using new target $\D := \{ \x^s, \by_t^s\}$ $\forall \; s$ 
		\ENDFOR
		\STATE Return $\N^T$
	\end{algorithmic}
	\label{alg:secost}
\end{algorithm}

%%%%%%%%%%%%%%%%%%%%%%%%%

\subsection{Analyzing SUSTAIN w.r.t to label noise} \label{sec:analysis}

In this section, we provide some insights into our SUSTAIN method with respect to label noise, a common problem in large scale learning of sound events. $\y^s_c \; \forall \; c$ denote our noisy observed labels. Let $y^{*s}_c$ be the corresponding true label parameterized as follows,
\begin{equation} \label{eq:true-labels}
y^s_c = \begin{cases} 
y^{*s}_c & \text{\it w.p.} \hspace{2mm} \delta_c \\ 
1-y^{*s}_c & \text{\it else} \end{cases}
\end{equation}  
Within the context of learning sounds, in the simplest case, $\delta_c$ characterizes the per-class noise in labeling process. Nevertheless, depending on the nature of the labels themselves, it may represent something more general like sensor noise, overlapping speakers and sounds etc. 

To analyze our approach and to derive some technical guarantees on performance, we assume a trained default teacher $\N^0$ and a new student to be learned (i.e., $T=1$). The new training targets in this case are given by
\begin{equation} \label{eq:new-label-1-stage}
\by_1^s = \alpha_0 \y^s + (1-\alpha_0) \hp_0^s \\
\end{equation}

Recall from Eq. \ref{eq:true-labels} that $\delta_c$ parameterizes the error in $\y^s$ vs. the unknown truth $\y^{*s}$.
Similarly, we define $\bar{\delta}_c$ to parameterize the error in $\hp^s$ vs. $\y^{*s}$ i.e., noise in teacher's predictions w.r.t the true unobserved labels. 

\begin{equation} \label{eq:one-teacher-delta-bar-def}
\hat{p}^s_{0,c} = \begin{cases} 
y^{*s}_c & \text{\it w.p.} \hspace{2mm} \bar{\delta_c} \\ 
1-y^{*s}_c & \text{\it else} \end{cases}
\end{equation}  
The interplay between $\delta_c$ and $\bar{\delta}_c$ in tandem with the performance accuracy of $\N^0$ will help us evaluate the gain in performance for $\N^1$ versus $\N^0$. 
To theoretically assess this performance gain, we consider the case of uniform noise $\forall \; c$ followed by a commentary on class-dependent noise. 
Further, we explicitly focus the technical results on high noise setting and revisit the low-to-medium noise setup in evaluations in Section \ref{sec:expt}. 

\subsubsection{Uniform Noise: $\delta_c = \delta \; \forall \; c$} \label{sec:same-noise}

This is the simpler setting where the apriori noise in classes is uniform across all categories with $\delta_c = \delta \;\forall \;c$. We have the following result.

\begin{proposition} \label{prop:delta-bar}
  Let $\N^1$ be trained using $\{\x^s, \by^s\} \;\forall \;s$ using binary cross-entropy loss, and let $\epsilon_c$ denote the average accuracy of $\N^0$ for class $c$.
  Then, we have
  \begin{equation} \label{eq:one-teacher-delta-bar-val}
    \bar{\delta}_c = \epsilon_c\delta + (1-\epsilon_c)(1-\delta) \; \forall \; c
  \end{equation}  
  and whenever $\delta < \frac{1}{2}$, $\N^1$ improves performance over $\N^0$. The per class performance gain is $(1-\epsilon_c)(1-2\delta)$ 
\end{proposition}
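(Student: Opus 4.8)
The plan is to work entirely at the level of a single class $c$ and a single example, reducing everything to binary events of ``agreement'' or ``disagreement'' with the unknown truth $y^{*s}_c$. Three Bernoulli quantities are in play: the observed label agrees with the truth with probability $\delta$ (Eq.~\ref{eq:true-labels}); the teacher prediction $\hp_{0}^s$ agrees with the truth with the sought probability $\bar{\delta}_c$ (Eq.~\ref{eq:one-teacher-delta-bar-def}); and the teacher prediction agrees with the \emph{observed} label with probability $\epsilon_c$, which is exactly the reported accuracy of $\N^0$, since $\N^0$ is trained and scored against $\y^s$. The entire argument hinges on expressing $\bar{\delta}_c$ through $\epsilon_c$ and $\delta$.

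First I would establish Eq.~\ref{eq:one-teacher-delta-bar-val} by a parity / law-of-total-probability argument. In the binary setting the teacher matches the truth in exactly two disjoint ways: it matches the observed label and that label is itself correct, or it disagrees with the observed label and that label is itself a flip. Writing $A = \{\hp_{0,c}^s = y^s_c\}$ and $B = \{y^s_c = y^{*s}_c\}$, this reads $\bar{\delta}_c = P(A\cap B) + P(A^c\cap B^c)$. Assuming that the teacher's error relative to the training label is independent of whether that label was corrupted---the single modelling assumption I would state explicitly---this factorizes to $\epsilon_c\delta + (1-\epsilon_c)(1-\delta)$, which is precisely Eq.~\ref{eq:one-teacher-delta-bar-val}.

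Next I would quantify the gain. The new target $\by_1^s$ in Eq.~\ref{eq:new-label-1-stage} is a convex combination of the observed labels (truth-accuracy $\delta$) and the teacher predictions (truth-accuracy $\bar{\delta}_c$), so the natural comparison is $\bar{\delta}_c$ against $\delta$. A direct simplification of $\bar{\delta}_c - \delta$ using Eq.~\ref{eq:one-teacher-delta-bar-val} collapses to $(1-\epsilon_c)(1-2\delta)$, the claimed per-class gain. Because $\epsilon_c \le 1$ and $1-2\delta > 0$ exactly when $\delta < \tfrac{1}{2}$, this difference is non-negative in the high-noise regime, so the teacher predictions---and hence the convexly combined targets---are strictly closer to the truth than the raw labels whenever $\epsilon_c < 1$.

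The step I expect to be the main obstacle is the final logical link: passing from ``the supervision targets are cleaner'' to ``$\N^1$ improves over $\N^0$.'' The clean identity above only concerns the \emph{targets}; concluding that the trained student inherits the improvement needs an extra monotonicity argument, namely that under binary cross-entropy a network trained on targets of higher truth-accuracy attains higher expected truth-accuracy itself. I would make this precise by observing that $\N^0$ already denoises, mapping $\delta$-accurate labels to a $\bar{\delta}_c > \delta$ prediction, so feeding $\N^1$ targets that are cleaner than $\delta$ can only push its achievable risk against the truth lower. The two places demanding the most care are exactly the independence assumption underlying Eq.~\ref{eq:one-teacher-delta-bar-val} and the implicit thresholding of the real-valued $\hp_{0}^s$ to a binary outcome in Eq.~\ref{eq:one-teacher-delta-bar-def}.
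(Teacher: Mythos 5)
Your proposal is correct in substance, and its two computational pillars coincide with the paper's: the identity $\bar{\delta}_c = \epsilon_c\delta + (1-\epsilon_c)(1-\delta)$ is obtained in both cases by the same two-case (agree/disagree) decomposition combining Eqs.~\ref{eq:true-labels} and \ref{eq:one-teacher-delta-bar-def}, and the algebra $\bar{\delta}_c - \delta = (1-\epsilon_c)(1-2\delta)$ is identical. You also make explicit two things the paper leaves implicit: that $\epsilon_c$ must be the accuracy of $\N^0$ measured against the \emph{observed} labels (otherwise the formula is vacuous), and that the factorization requires independence of the teacher's agreement event from the label-corruption event. Where you genuinely diverge is the bridge from ``cleaner targets'' to ``better student.'' The paper never compares target accuracies directly; it exploits the fact that binary cross-entropy is affine in the target to write $\l(p^s_c,\bar{y}^s_c) = \alpha_0\,\l(p^s_c,y^s_c) + (1-\alpha_0)\,\l(p^s_c,\hat{p}^s_c)$ (Eq.~\ref{eq:prop-1}), takes expectations over $s$, and observes that the expected training objective is a weighted combination of $\sum_s \l(p^s_c,y^{*s}_c)$ and $\sum_s \l(p^s_c,1-y^{*s}_c)$, with truth-weight $\alpha_0\delta + (1-\alpha_0)\bar{\delta}_c$; improvement is declared exactly when this weight exceeds the truth-weight $\delta$ of the original objective (Eq.~\ref{eq:prop-2}). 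Your route replaces this loss decomposition with the direct comparison $\bar{\delta}_c > \delta$ plus a monotonicity assumption that you rightly flag as unproven; note, however, that the paper's decomposition does not close that gap either---it also asserts, rather than proves, that a larger truth-weight in the minimized objective yields a more accurate trained network---so the two arguments sit at the same level of rigor, with the paper's offering a more concrete mechanism. One curiosity in your favor: the paper's own gain expression $\alpha_0\delta + (1-\alpha_0)\bar{\delta}_c - \delta$ actually equals $(1-\alpha_0)(1-\epsilon_c)(1-2\delta)$, so a factor of $(1-\alpha_0)$ is silently dropped there, whereas your comparison $\bar{\delta}_c - \delta$ literally equals the stated per-class gain $(1-\epsilon_c)(1-2\delta)$.
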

\begin{proof}
Recall the entropy loss from Eq. \ref{eq:crossent}, for a given $s$ and $c$. 
Using the definition of the new label from Eq. \ref{eq:new-label-1-stage}, we get the following 
\begin{equation} \label{eq:prop-1}
    \l(p^s_c, \bar{y}^s_c) = \alpha_0 \l(p^s_c, y^s_c) + (1-\alpha_0) \l(p^s_c, \hat{p}^s_c)  
\end{equation}
Now, Eq. \ref{eq:true-labels} says that {\it w.p.} $\delta$ (recall $\delta_c = \delta \; \forall \; c$ here), $\l(p^s_c, y^s_c) = \l(p^s_c, y^{*s}_c)$, else $\l(p^s_c, y^s_c) = \l(p^s_c, 1-y^{*s}_c)$. 
Hence, using Eq. \ref{eq:true-labels} and Eq. \ref{eq:one-teacher-delta-bar-def}, and using the resulting equations in Eq. \ref{eq:prop-1} we have the following
\begin{align*}
    &\mathop{\mathbb{E}}_s \l(p^s_c, y^s_c) = \delta \sum_{s=1}^S \l(p^s_c, y^{*s}_c) + (1-\delta) \sum_{s=1}^S \l(p^s_c, 1-y^{*s}_c) \\
    &\mathop{\mathbb{E}}_s \l(p^s_c, \hat{p}^s_c) = \bar{\delta}_c \sum_{s=1}^S \l(p^s_c, y^{*s}_c) + (1-\bar{\delta}_c) \sum_{s=1}^S \l(p^s_c, 1-y^{*s}_c) \\
    &\mathop{\mathbb{E}}_s \l(p^s_c, \bar{y}^s_c) = (\alpha_0\delta + (1-\alpha_0)\bar{\delta}_c) \sum_{s=1}^S \l(p^s_c, y^{*s}_c) \\ &\quad + (\alpha_0(1-\delta) + (1-\alpha_0)(1-\bar{\delta}_c)) \sum_{s=1}^S \l(p^s_c, 1-y^{*s}_c)
\end{align*}
If $(\alpha_0\delta + (1-\alpha_0)\bar{\delta}_c) > \delta$ then we can ensure that using $\bar{y}^s_c$ as targets is better than using $y^s_c$. 
Now given the accuracy of $\N^0$ denoted by $\epsilon_c \; \forall \; c$, combining Eq. \ref{eq:true-labels} and Eq. \ref{eq:one-teacher-delta-bar-def}, we can see that $\bar{\delta}_c = \epsilon_c\delta + (1-\epsilon_c)(1-\delta)$. Using this, for $\N^1$ to be better than $\N^0$,  we need
\begin{equation} \label{eq:prop-2}
    \alpha_0\delta + (1-\alpha_0)(\epsilon_c\delta + (1-\epsilon_c)(1-\delta)) > \delta
\end{equation}
which requires $\delta < \frac{1}{2}$. And the gain is simply $\alpha_0\delta + (1-\alpha_0)\bar{\delta}_c - \delta$ which reduces to $(1-\epsilon_c)(1-2\delta)$. 
\end{proof}

\subsubsection{Remarks} \label{sec:remarks}

The above proposition is fairly intuitive and summarizes a core aspect of the proposed framework. 
Observe that, Proposition \ref{prop:delta-bar} is rather conservative in the sense that we are claiming $\N^1$ is better than $\N^0$ only if Eq. \ref{eq:prop-2} holds for all classes, 
i.e., performance improves for all classes. 
This may be relaxed, and we may care more about some specific classes. 
We discuss this below, for the high and low noise scenarios separately. 

\noindent \paragraph{High noise $\boldsymbol{\delta} < \frac{\mathbf{1}}{\mathbf{2}}:$} \label{sec:high-noise}
The given labels $y^s_c$ are wrong more than half of the time, 
and with such high noise, we expect $\N^0$ to have high error i.e., $\hat{p}^s_{0,c}$ and $y^s_c$ do not match.
Putting these together, as Proposition \ref{prop:delta-bar} suggests, 
the probability that $\hat{p}^s_{0,c}$ matches the truth $y^{*s}_c$ is implicitly large, 
leading to $\bar{\delta}_c > \delta$. 
Note that we cannot just flip {\bf all} predictions i.e., $p^s_{0,c} = 1-y^s_c$ would be infeasible, 
and there is some trade-off between $\N^0$'s predictions and given labels. 
Thereby, the choice of $\alpha_0$ then becomes critical (which we discuss further in Section \ref{sec:alpha-vs-T}).
Beyond this interpretation, we show extensive results in Section \ref{sec:expt} supporting this. 

\paragraph{Low-to-medium noise $\boldsymbol{\delta} > \frac{\mathbf{1}}{\mathbf{2}}:$}  \label{sec:low-noise}
When $\delta \gg \frac{1}{2}$, $\N^0$ is expected to perform well, 
and $\hat{p}^s_{0,c}$ matches $y^s_c$, which in turn matches $y^{*s}_c$ since the noise is low. 
Hence, $\N^1$'s role of combining $\N^0$'s output with $y^s_c$ becomes rather moot, because on average, for most cases, they are same. 
For medium noise settings with $1 \gg \delta > \frac{1}{2}$, proposition \ref{prop:delta-bar} does not infer anything specific. 
Nevertheless, via extensive set of experiments, we show in section \ref{sec:expt} that $\N^1$ still improves over $\N^0$ in some cases. 

\noindent \paragraph{Class-Specific Noise: $\delta_c \neq \delta \; \forall \; c$} \label{sec:different-noise}
It is reasonable to assume that in practice there are specific classes of interest that we desire to be 
more accurately predictable than others, including the fact that annotation is more carefully done for such classes. 
One can generalize Proposition \ref{prop:delta-bar} for this class-dependent $\delta_c$s, 
by putting some reasonable lower bound on loss of accuracy for undesired classes $\epsilon_c$s. 
We leave such technical details to a follow-up work, and now address the issue of choosing $\alpha$s for learning.  

\subsubsection{Interplay of $\alpha_t$ and $T$} \label{sec:alpha-vs-T}

Recall that the main hyperparameters of SUSTAIN are the weights $\alpha_0,\ldots,\alpha_T$ and $T$, 
and the main unknowns are the noise levels in the dataset ($\delta_c$).
%As described in Section \ref{sec:remarks}, $\delta_c$ governs the gain in performance.  
We now suggest that Algorithm \ref{alg:secost} is implicitly robust to these unknowns and provides an empirical strategy to choose the hyperparameters as well. 
We have the following result focusing on a given class $c$. $\bT_c$ and $\bT$ denote the optimal number of stages per class $c$ and across all classes respectively. The proof is in supplement. 

\begin{corollary} \label{cor:optimal-T}
Let $\epsilon_c^t$ denote the accuracy of $\N^t$ for class $c$. 
Given some $\delta$, there exists an optimal $\bT^c$
%and a corresponding sequence of weights $\alpha_0,\ldots,\alpha_{\bT}$, 
such that $\epsilon_c^{\bT^c} \geq \epsilon_c^t$. 
\end{corollary}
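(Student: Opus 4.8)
The plan is to lift the single-stage analysis of Proposition~\ref{prop:delta-bar} into a stage-to-stage recursion for the per-class agreement-with-truth probability, and then to read off the optimal stage from the behaviour of that recursion. Throughout I fix a class $c$ and track $\bar{\delta}_c^t$, the probability that the prediction of $\N^t$ matches the true label $y^{*s}_c$; this is the per-class true accuracy $\epsilon_c^t$ appearing in the statement. To avoid overloading notation I reserve a separate symbol $\eta_t$ for the stage-$t$ \emph{fitting} accuracy of $\N^t$ against its own (soft) targets, which is the quantity playing the role of $\epsilon_c$ in Proposition~\ref{prop:delta-bar}.

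First I would compute the effective noise of the training target at a general stage $t$. Since cross-entropy is linear in its label argument, via the identity $\l(p^s_c,\bar{y}^s_c)=\alpha_0\l(p^s_c,y^s_c)+(1-\alpha_0)\l(p^s_c,\hat{p}^s_c)$ already used in the proof of Proposition~\ref{prop:delta-bar}, the convex target $\by_t^s=\alpha_0\y^s+(1-\alpha_0)\hp_{t-1}^s$ behaves in expectation like a label that agrees with the truth with probability $\tilde{\delta}_t=\alpha_0\delta+(1-\alpha_0)\bar{\delta}_c^{t-1}$. Re-running the mechanism of Proposition~\ref{prop:delta-bar} with $\tilde{\delta}_t$ in place of $\delta$ then gives the recursion $\bar{\delta}_c^{t}=\eta_t\,\tilde{\delta}_t+(1-\eta_t)(1-\tilde{\delta}_t)$, equivalently a per-stage change $\bar{\delta}_c^{t}-\tilde{\delta}_t=(1-\eta_t)(1-2\tilde{\delta}_t)$, which is the direct analogue of the gain $(1-\epsilon_c)(1-2\delta)$ in Proposition~\ref{prop:delta-bar}.

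With the recursion in hand, the remainder is a monotonicity-plus-boundedness argument. In the high-noise regime $\delta<\tfrac{1}{2}$, Proposition~\ref{prop:delta-bar} already yields $\bar{\delta}_c^0>\delta$; I would then show by induction that while $\tilde{\delta}_t<\tfrac{1}{2}$ each stage strictly increases the agreement-with-truth probability, using the positive sign of $(1-2\tilde{\delta}_t)$ together with imperfect fitting $\eta_t<1$. Because every $\bar{\delta}_c^t\in[0,1]$, this increasing sequence is bounded and must saturate; moreover the increments shrink as $\tilde{\delta}_t$ approaches the crossover at $\tfrac{1}{2}$, beyond which $(1-2\tilde{\delta}_t)$ flips sign and further stages can no longer help. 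Since $\{\epsilon_c^t\}_{t=0}^{T}$ is a finite sequence, its maximum is attained, and I would define $\bT^c$ as the first index at which the increment becomes non-positive (the saturation/peak point), which by construction satisfies $\epsilon_c^{\bT^c}\ge\epsilon_c^t$ for all $t$.

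The main obstacle is the role of the per-stage fitting accuracy $\eta_t$, which I have treated as an exogenous quantity strictly below $1$. To make the monotonicity rigorous I must argue that fitting stays imperfect and stable enough across stages that the improvement direction is preserved; if $\eta_t$ were allowed to drift to $1$, the recursion collapses to the fixed point $\bar{\delta}=\delta$ and no gain accrues. A secondary subtlety is the soft (non-binary) nature of $\by_t^s$: Proposition~\ref{prop:delta-bar} is phrased through an effective binary agreement probability, so I would need the linearity-in-label reduction above to carry the soft targets faithfully into the same framework. Once those two points are handled, the existence of $\bT^c$ reduces to the fact that a bounded, eventually non-increasing sequence on the finite index set $\{0,\ldots,T\}$ attains its maximum.
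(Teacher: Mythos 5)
Your proposal is correct and follows essentially the same route as the paper's own (much terser) proof: iterate Proposition \ref{prop:delta-bar} stage by stage with an updated effective agreement-with-truth parameter, observe that gains persist only while that parameter stays below $\tfrac{1}{2}$, and place $\bT^c$ at the crossover/saturation point. The paper's supplement states this in four sentences without writing out the recursion or the fitting-accuracy caveat you flag via $\eta_t$; your version supplies those details, but the underlying argument is identical.
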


\noindent \paragraph{Remarks.} 
The main observation here is that $\bT_c$ might be very different for each $c$, 
and it may be possible that $\bT_c = 0$ in certain cases, i.e., the teacher is already better than any student. 
In principle, there may exist an optimal $\bT$ that is class independent for the given dataset, 
but it is rather hard to comment about its behaviour in general without explicitly accounting for the individual class-specific accuracies $\epsilon_c$s.
This is simply because correcting for noisy labels in one class may have the outcome of corrupting another class. 
Lastly, it should be apparent that the gain in performance per class $c$ has diminishing returns as $T$ increases. 

\noindent \paragraph{Choosing $\bT$ and $\alpha_t$s:} \label{sec:choose-bt-alphas}
Corollary \ref{cor:optimal-T} is an existence result and does not give us a procedure to compute $\bT_c$s (or $\bT$) and the corresponding $\alpha$s. In practice, there is a simple strategy one can follow. 
At stage $0$, we train $\N^0$ and we record its average across-class performance $\epsilon^0 = \frac{1}{C}\sum_{c=1}^C\epsilon^0_c$.
At stage $1$, we empirically select the best $\alpha_0$ that results in maximal $\epsilon^1$. 
If $\epsilon^1 \leq \epsilon^0$, then we stop and declare $\bT = 0$ i.e., no student needed. 
On the other hand, if $\epsilon^1 \geq \epsilon^0$, then we continue to stage $t=2$. And repeat this process until the accuracy $\epsilon^t$ saturates or starts to decrease. 
This averages out the per-class influence on $\bT$. 

%%%%%%%%%%%%%%%%%%%%%%%%%%%%%%%% CNN for Audio %%%%%%%%%%%%%%%%%%%%%%%%%%%%%%%%%%%%%%%%

\section{CNN for Weakly Labeled SER} \label{sec:weakcnn}

We now evaluate Algorithm \ref{alg:secost} for weakly labeled SER, as motivated in Section \ref{sec:intro}. 
We first propose a novel architecture for the problem and then study SUSTAIN using this network. 
Observe that most of the existing approaches to SER are variants of Multiple Instance Learning (MIL)~\cite{dietterich1997solving}, 
the first proposed framework being \cite{kumar2016audio}. 

Our key novelty is to include a class-specific ``attention'' learning mechanism within the MIL framework. 
We introduce some brief notation followed by presenting the model.  
In MIL, the training data $\D$ is made available via \emph{Bags} $\mathcal{B}_i$, 
with each bag corresponding to a collection of $m_i$ training instances $\{\x^1_i,\dots,\x^{m_i}_i\}$. 
Each $\mathcal{B}_i$ has one label vector $\z_i$. 
$\z_{i,c} = 1$ for class $c$ if {\it at least} one of the $m_i$ instances is positive, otherwise $\z_{i,c} = 0$. 

The key idea in MIL is that the learner first predicts on instances, 
and then maps (accumulates) these instance-level predictions to a bag-level prediction. 
For instance, a widely used SVM based MIL~\cite{andrews2002support} uses this principle, 
using $max$ operator as the mapping function. 
Based on similar principle, we formulate the learning process as follows: 
\begin{equation} \label{eq:trainloss}
    %\mathcal{L}(\Theta, \Phi) = 
    \sum_{i=1}^N \ell(g_{\Phi}(f_\Theta(\x^1_i), \cdots, f_\Theta(\x^{m_i}_i)), \z_i)
\end{equation}
$f(\cdot)$, parameterized by $\Theta$, is the learner and does the instance level prediction of outputs, 
and $g(\cdot)$ maps these $f_{\Theta}(\x^s_i)$ to bag level predictions. 
For weakly labeled SER, $\mathcal{B}_i$s are full audio recordings and instances are short duration segments of the recordings. 
We design a CNN which takes in Log-scaled Melfilter-bank feature representations of the entire audio recording, 
produces instance (i.e., segment) level predictions which are then mapped to recording level predictions. 

The inputs are computed as follows: $64$ Mel-filter-bank is obtained for each $16$ms window of audio, and the window moves by $10$ms, 
leading to $100$ Logmel frames per second of audio (with a sampling rate of $16$KHz for the audio recordings). 

\begin{table}[t]
  \centering
  \resizebox{1.0\columnwidth}{!}{
    \begin{tabular}{c|c|c}
    \toprule
    \textbf{Stage}              & \textbf{Layers}                                                           &  \textbf{Output Size}   \\
    \midrule
    Input                       & Unless specified -- (S)tride = 1, (P)adding = 1 & $1 \times 1024 \times 64$ \\
    \midrule
    \multirow{3}{*}{Block B1}   & Conv: 64, $3 \times 3$      & $64 \times 1024 \times 64$ \\
                                & Conv: 64, $3 \times 3$     & $64 \times 1024 \times 64$\\
                                & Pool: $4 \times 4$ (S:4)                                                  & $64 \times 256 \times 16$ \\
    \midrule
	\multirow{3}{*}{Block B2}   & Conv: 128, $3 \times 3$     & $128 \times 256 \times 16$ \\
                                & Conv: 128, $3 \times 3$     & $128 \times 256 \times 16$ \\
                                & Pool: $2 \times 2$ (S:2)                                                  & $128 \times 128 \times 8$ \\ 
    \midrule
	\multirow{3}{*}{Block B3}   & Conv: 256, $3 \times 3$    & $256 \times 128 \times 8$ \\
                                & Conv: 256, $3 \times 3$     & $256 \times 128 \times 8$ \\
                                & Pool: $2 \times 2$ (S:2)                                                  & $256 \times 64 \times 4 $ \\       
    \midrule
    \multirow{3}{*}{Block B4}   & Conv: 512, $3 \times 3$     & $512\times 64 \times 4 $ \\
                                & Conv: 512, $3 \times 3$     & $512 \times 64 \times 4 $ \\
                                & Pool: $2 \times 2$ (S:2)                                                  & $512 \times 32 \times 2 $ \\ 
    \midrule
    Block B5                    & Conv: 2048, $3 \times 2$ (P:0)   & $2048 \times 30 \times 1 $ \\
	\midrule 
	Block B6                    & Conv: 1024, $1 \times 1$   & $1024 \times 30 \times 1 $ \\
	\midrule
	Block B7                   & Conv: 1024, $1 \times 1$ & $1024 \times 30 \times 1 $ \\
	\midrule
	Block B8                    & Conv: C, $1 \times 1$  & $C \times 30 \times 1 $ \\
	\midrule
	$g()$	                        & $\mathbf{W}_{\Phi}$ ($C \times C$)       & $C \times 1 $  \\
	\bottomrule      
    \end{tabular}
    }
  \caption{WEANET: All convolutional layers (except B8) are followed by batch norm and ReLU; Sigmoid activation follows B8.}%
  \label{tab:cnnarch}
\end{table}%

The proposed architecture, referred to as WEakly labeled Attention NETwork (WEANET), is shown in Table \ref{tab:cnnarch}. 
Example output sizes for an input with $1024$ Logmel frames (approx. 10 seconds long audio) is shown in Table \ref{tab:cnnarch}. 
The first few convolutional layers (B1 to B5) produce $2048$ dimensional bag representations for the input at Block B5. 
B6-B8 are $1 \times 1$ convolutional layers that produce instance level predictions of size $C \times K$, $C$ is number of classes and $K$ is number of segments obtained for a given input. 
The network is designed such that the receptive field of each segment (i.e., instance) is $\sim 1$ second ($96$ frames), and the segments themselves move by $0.33$ seconds ($32$ frames). 
The instance level predictions are then used to produce bag (i.e., recording) level predictions using $g(\cdot)$. 
An easy parameter free way of doing this is to use $mean$ (or $max$) functions which will simply take average (or maximum) over segment level predictions from B8. 

Instead, we propose an attention mechanism here which aims to appropriately weigh each segment's contribution in the final recording level prediction. 
Moreover, this is done in a class-specific manner as different sounds might be located at different places in the recording. 
More formally, $g(\cdot)$ is parameterized as follows: 
\begin{align} \label{eq:gmap}
\mathbf{A} &= \tilde{\sigma}(\mathbf{W}_{\Phi} \mathbf{S}) \\
\mathbf{o} &= \sum_{k=1}^K \tilde{\mathbf{O}}_k \quad\text{s.t.}\quad \tilde{\mathbf{O}} = \mathbf{A} \odot \mathbf{S} 
\end{align}
% POINT OUT  THAT W does not grow with size of input
where $\mathbf{W}_{\Phi} \in \mathrm{R}^{C\times C}$. $\mathbf{S} \in \mathrm{R}^{C\times K}$ denotes the segment level predictions. 
$\tilde{\sigma}$ is the softmax function applied across segments, 
and $\mathbf{A} = \tilde{\sigma}(\mathbf{W}_{\Phi} \mathbf{S})$ gives us the attention weights for each segment and class.
$\odot$ is element wise multiplication and $\tilde{\mathbf{O}}_k$ is $k^{th}$ column of $\tilde{\mathbf{O}}$, 
which represents the {\it weighted} predictions for each class in $k^{th}$ segment. 
All these are then pooled into $\mathbf{o}$, which represents the recording level prediction for the input.  
$\mathbf{W}_{\Phi}$ is learned along with rest of the parameters of the WEANET. Note that, the size of attention parameter $\mathbf{W}_{\Phi}$ is independent of the number of segments obtained for an input or in other words the duration of the input. It depends on the number of classes in the dataset. 

%%%%%%%%%%%%%%%%%%%%%%%%%%%%%%%%% Experiments %%%%%%%%%%%%%%%%%%%%%%%%%%%%%%%%%%%%%%%%%

\section{Experiments and Results} \label{sec:expt}

\subsection{Datasets and Experimental Setup}

\noindent {\bf Audioset:}~\cite{gemmeke2017audio} \label{sec:audioset} is very challenging dataset in terms of adverse learning conditions outlined in Section \ref{sec:intro}. It is the largest dataset for sound events with weakly labeled YouTube clips for $527$ sound classes. 
Each recording is $\sim 10$ seconds long and on an average, there are $2.7$ labels per recording. 
%Audioset comes into pre-defined training set and evaluation set. 
The training and evaluation sets consist of $\sim 2$ million and $\sim 20,000$ recordings respectively. 
The dataset is highly unbalanced with  the number of training examples varying from close to $1$ million for classes such as \emph{Music} and \emph{Speech} to $<100$ for classes such as \emph{Screech} and \emph{Toothbrush}. 
The evaluation set has at least $59$ examples for each class. 
A sample of $\sim 25,000$ videos from the training set are sampled out for validation. 

An analysis of label noise was done by the authors by sampling 10 examples for each class and sending them for expert label reviewing. This puts label noise at broad range of 0 to 80-90\%  across classes. Note however that this is an extremely rough estimate for a dataset of this size. 

\noindent {\bf FSDKaggle:}\label{sec:fsdkaggle} ~\cite{fonseca2019audio} is a dataset of 80 sound events. It has $2$ training sets: a \emph{Curated} set with $4970$ recordings and a \emph{Noisy} set with $19,815$ audio recordings. 
The \emph{Curated} set is a clean training set which has been carefully annotated by humans to ensure minimal to no label noise. 
The \emph{Noisy} training set is obtained from Flickr videos and not labeled by humans. They contain considerable amount of label noise. The evaluation set has $3361$ recordings. 
We use the \emph{Public} test set with $1120$ recordings for validation. 

~\cite{fonseca2019audio} does a more thorough examination of label noise. The estimated per-class label noise roughly ranges from 20\% to 80\% and overall around  60\% of the labels show some type of label noise. While the \emph{Curated}, validation and test sets are sourced from freesounds.org (and then labeled by humans), 
the \emph{Noisy} training set recordings are sourced from Flickr. This heavy mismatch in domain adds on to the already difficult learning conditions for the \emph{Noisy} training set and leads to considerable impact on performance.  

\noindent {\bf ESC-50:}~\cite{piczak2015esc} \label{sec:esc50}
This dataset consists of $2000$ recordings from $50$ sound classes. 
Each sound class has 40 audio recordings and all recordings are $5$ seconds long. 
We use this dataset primarily in our transfer learning experiments in Section \ref{sec:transfer}. 
It comes with $5$ pre-defined sets and we follow the same setup in our experiments as in prior works such as \cite{kumar2017knowledge}. 

\noindent {\bf Experimental Setup:} \label{sec:exp-setup}
All of our experiments uses Pytorch~\cite{paszke2017automatic} for neural network implementations. Adam optimizer is used, and networks are trained for 20 epochs. Minibatch size is set to 144. Hyperparameters such as learning rates and the best model during training is selected using the validation set. The attention weight parameter $\mathbf{W}_{\Phi}$ is initialized with $0$'s such that the initial attention weights come out to be equal for all segments for all classes, $\mathbf{A}_{ck} = 1/K$, $\forall c, k$. The updates for attention weight parameter is turned on from fifth epoch. For Audioset, given its highly unbalanced nature, we use a weighted loss for each class. 
This weight for class $c$ is given by $w_c = 1 + log_2(\gamma_c)$, 
where $\gamma_c$ is the inverse of the class prior in the training set. The training set up is consistent across all stages of SUSTAIN and only the teacher(s) and the parameter $\alpha$ changes. 

Similar to prior works on Audioset, Average Precision (AP) and Area under ROC curves (AUC) are used to measure performance. 
Mean AP (mAP) and mean AUC over all classes are used as overall metrics for performance assessment. For FSDkaggle dataset, the metric used is a label-weighted label-ranking average precision (lwlrap)~\cite{fonseca2019audio}. 
Given the smaller size of this dataset, we use a lighter version of WEANET. 
The details of this lighter WEANET are provided in the supplementary material. For ESC-50 dataset accuracy is used as the performance metric.

\subsection{WEANET Model} \label{sec:weanet-eval}

\begin{table}[t!]
  \centering
    %\resizebox{1.0\columnwidth}{!}{
    \begin{tabular}{c|c|c}
    \toprule
	Method & mAP & mAUC \\ %& Method & mAP & mAUC \\
    \midrule
    \cite{kong2019weakly}-1 & 0.361 & 0.969 \\
	\cite{wang2019comparison}-1  & 0.354 & 0.963 \\ 
	\midrule
	\cite{kong2019weakly}-2   & 0.369 &  0.969 \\
	\cite{wang2019comparison}-2 & 0.362  & 0.965 \\
	\midrule
	WEANET ($g() = avg()$) & 0.352 & 0.970 \\
	WEANET ($g(\dot,\mathbf{W}_\Phi)$)  & 0.366  & 0.958 \\
	\bottomrule
    \end{tabular}%}
    \caption{Comparison of WEANET with other attention architectures on Audioset dataset.}
  \label{tab:weanet}%
\end{table}%

\begin{figure}[t]
      \centering
      \includegraphics[width=\linewidth]{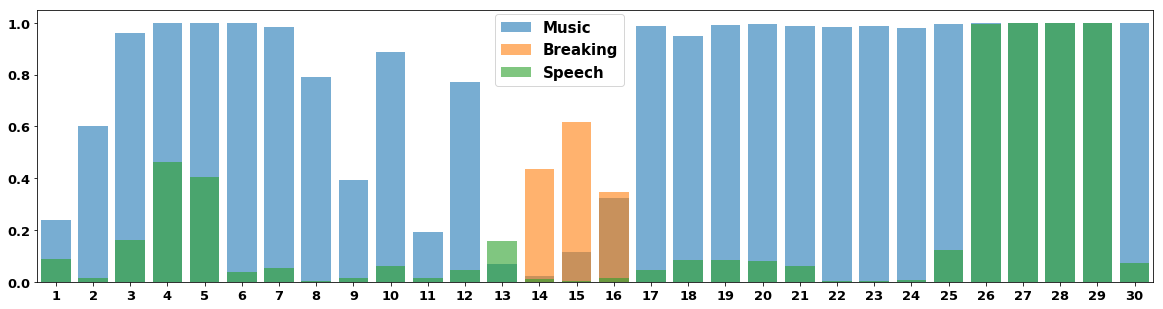}
      \includegraphics[width=\linewidth]{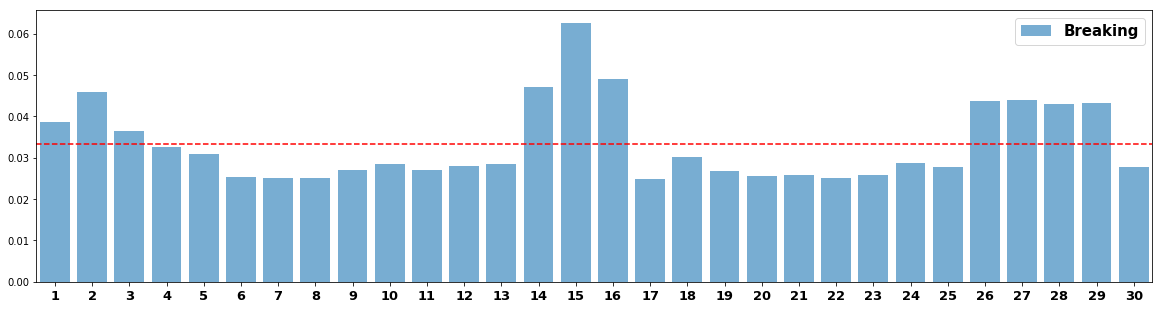}
      \includegraphics[width=\linewidth]{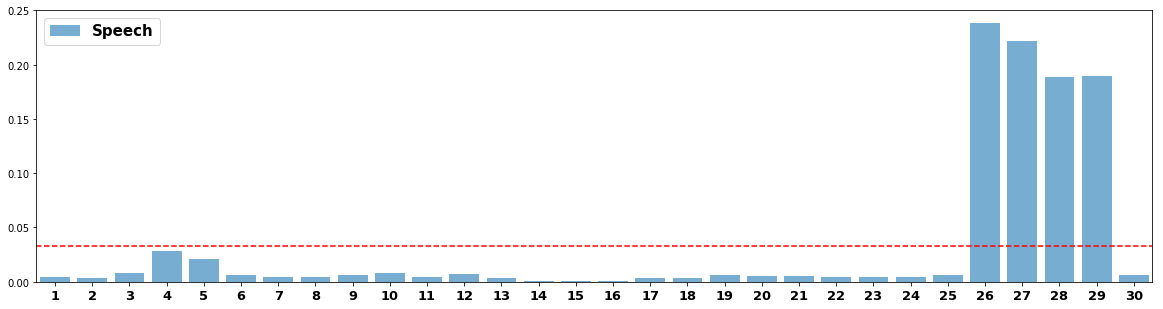}
    \caption{An example of WEANET outputs on a recording from test set. \textbf{Top}: Segment level probability outputs for three classes present in the recording. \textbf{Mid}: Segment wise attention weights ($\mathbf{A}$) for the Breaking Sound. \textbf{Bottom}: Segment wise attention weights ($\mathbf{A}$) for the Speech Sound. Red line denotes if all segments were given equal weights (1/30).}
    \label{fig:example_out}
\end{figure}

We first provide some results on WEANET. 
Table \ref{tab:weanet} shows performance of WEANET framework and compares it with respect to some other attention frameworks for weakly labeled SER. 
Note that, \cite{kong2019weakly} uses embeddings for audio recordings from a network trained on a very large database (YouTube-70M) \cite{hershey2017cnn}. 
These pre-trained representations lead to enhanced performance on Audioset. 
We (and also \cite{wang2018comparing}) work with the actual audio recordings and use logmel feature representations.  
In summary, WEANET performs better than other attention frameworks. 
\cite{kong2019weakly}-2 performs slightly better but uses pre-trained embeddings as just mentioned. 
WEANET with class-specific attention is $~4\%$ better than WEANET with $g(\cdot)$ as simple average pooling. 

The major advantage of having class-specific attention learning is for localization of events. 
Figure \ref{fig:example_out} shows segment level outputs for $3$ sounds present in a specific recording from the test set. 
Note that we matched with the location of the events in the actual recording. 
The lower two figures show attention weights for the two events (\emph{Breaking} sound and \emph{Speech} sound) that are highly localized in the recording. 
Observe that the weights are much higher than average for segments where the event is actually located. 
For speech in particular, segments $1-6$ show high probability of presence even though actually speech is not present. 
However, the class-specific attention framework is capable of flagging this false positive and assigns very low weights to them.   

\subsection{SUSTAIN Framework} \label{sec:sustain-eval}

\paragraph{Comparison with state-of-the-art:}
\begin{table}[t!]
  \centering
    \resizebox{\columnwidth}{!}{
    \begin{tabular}{c|c|c}
    \toprule
	Method & mAP & mAUC \\
    \toprule
    \cite{kong2019weakly} - Small & 0.361 & 0.969 \\
    \cite{kong2019weakly} - Large   & 0.369 &  0.969 \\
	\midrule 
	\cite{wang2019comparison} - TALNet (exp. pooling)  & 0.362   & 0.965 \\ 
	\cite{wang2019comparison} - TALNeT (Attention) & 0.354  & 0.963 \\
	\midrule
	\cite{ford2019deep} - ResNet-34 (Attention) & 0.360 & 0.966 \\
	\cite{ford2019deep} - ResNet-101 (Attention) & 0.380 & 0.970 \\
	\midrule
	WEANET & 0.366 & 0.958 \\
	\textbf{SUSTAIN - Single Teacher} & \textbf{0.394} & \textbf{0.972} \\
	\textbf{SUSTAIN - 2 Teachers} & \textbf{0.398} & \textbf{0.972} \\
	
	\bottomrule
    \end{tabular}}
    \caption{Comparison with state-of-the-art methods on Audioset}
  \label{tab:sota}%
\end{table}%

Table \ref{tab:sota} compares performance of our SUSTAIN framework with state-of-the-art methods on Audioset. 
``SUSTAIN-Single Teacher" uses $1$ teacher at each stage, specifically the network trained in the previous stage (as in Eq. \ref{eq:new-labels-1-teacher}). 
``SUSTAIN-2 Teachers" uses $2$ networks learned in the previous $2$ consecutive stages as teachers.
SUSTAIN learning leads to superior performance over all prior methods. 
The ResNet based architectures in Table \ref{tab:sota} are much 
larger compared to our WEANET and have several times more parameters. 
Our method outperforms the previous best method (ResNet-50) by $~4.7\%$. Note that, \cite{ford2019deep} also reports a performance of $0.392$ but that is obtained through ensemble of models, by averaging outputs of multiple models.  
%\begin{wrapfigure}[11]{r}{0.57\linewidth} 
% \includegraphics[width=\linewidth]{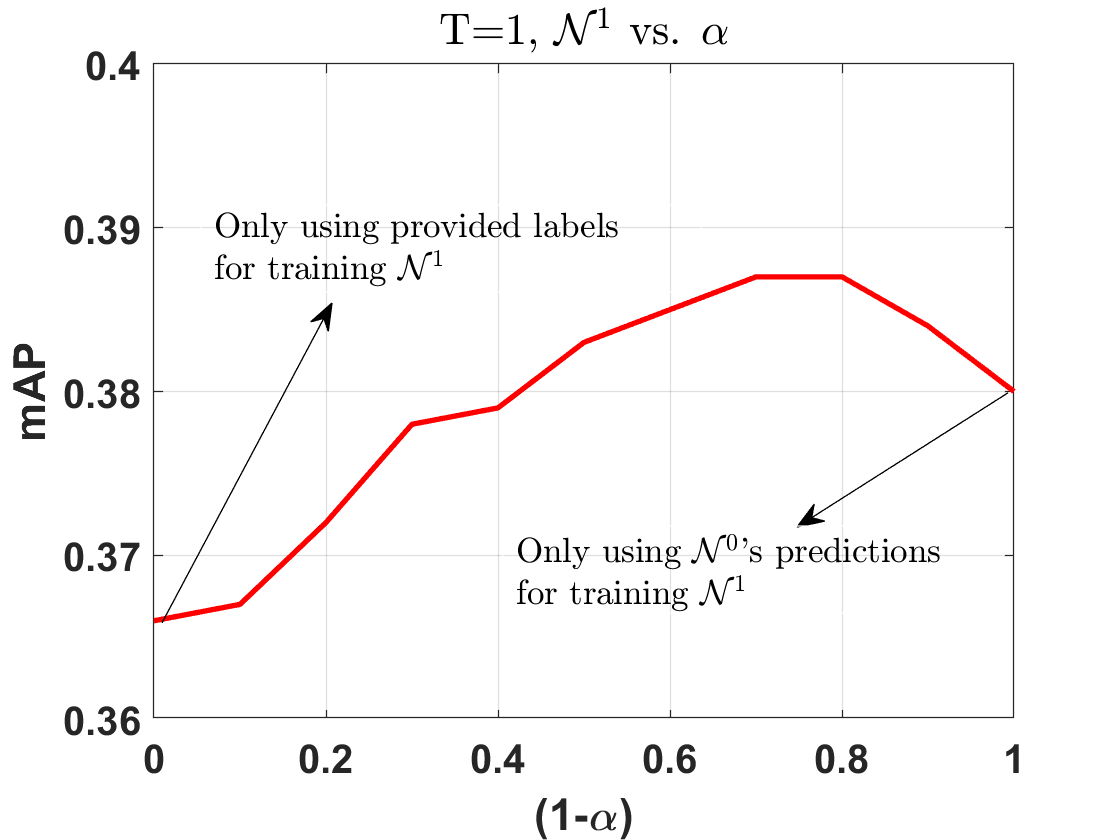}
% \caption{\label{fig:vs_alphas} Single teacher: $\N^1$ vs. $\alpha_0$}
%\end{wrapfigure}

\begin{figure}[t] 
\centering
 \includegraphics[width=0.8\columnwidth]{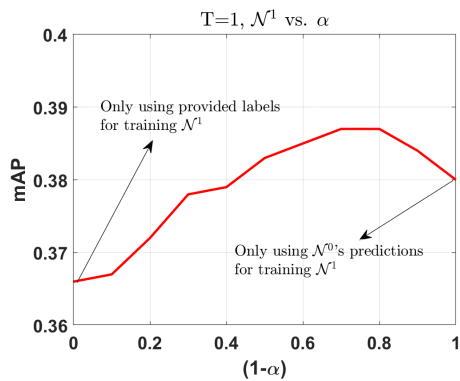}
 \caption{\label{fig:vs_alphas} Single teacher: $\N^1$ vs. $\alpha_0$}
\end{figure}

Focusing primarily on the SUSTAIN learning, we notice that it can lead up to $8-9\%$ improvement in results for the WEANET model. Thus, the same architecture WEANET, generalizes much better after a few stages of SUSTAIN learning as opposed to just training it on the available labels.

\noindent {\bf Single Stage ($T=1$) vs. varying $\alpha_0$:}
Figure \ref{fig:vs_alphas} shows that $\alpha_0$ influences mAP, as suggested by in Section \ref{sec:remarks}. %
The two extremes of $\alpha_0=0$ (only using $\N^0$'s predicted labels) and $\alpha_0=1$ (only using provided labels $\y^s$ for learning) perform worse than learning using a combination of the two. 
This asserts our primary claim in Proposition \ref{prop:delta-bar}. Depending on the weight ($1 - \alpha_0$) given to the teacher, even a single stage of SUSTAIN can lead to up to 5.7\% improvement in performance. 

\noindent {\bf Multiple Stages ($T>1$) vs. $\alpha_0$:}
For a fixed $\alpha_0$, Figure \ref{fig:Nt_alpha}(a) shows that as $T$ increases mAP starts to increase and then quickly saturates, showing evidence for Corollary \ref{cor:optimal-T}. 
Note that this setup corresponds to using the last trained network as teacher (i.e., $\N^3$ uses $\N^2$ as the teacher).
It is reasonable to expect that $\N^t$ is better than $\N^{t-1}$, and so, 
one can put more confidence in the predicted labels of latest teachers than teachers from earlier stages. 
This is validated in Figure \ref{fig:Nt_alpha}(b) where $\N^0$ uses only predicted labels ($\alpha_0=1$) and as $T$ increases, we reduce $\alpha_0$, 
putting more confidence on teacher's predictions and achieve better mAP.  Unlike the fixed alpha case, considerable improvement is obtained from stage 1 to 2 and then stage 2 to 3 by increasing the weight given to teacher's predictions. 

\begin{figure}[t!]
      \centering
      \includegraphics[width=0.51\linewidth]{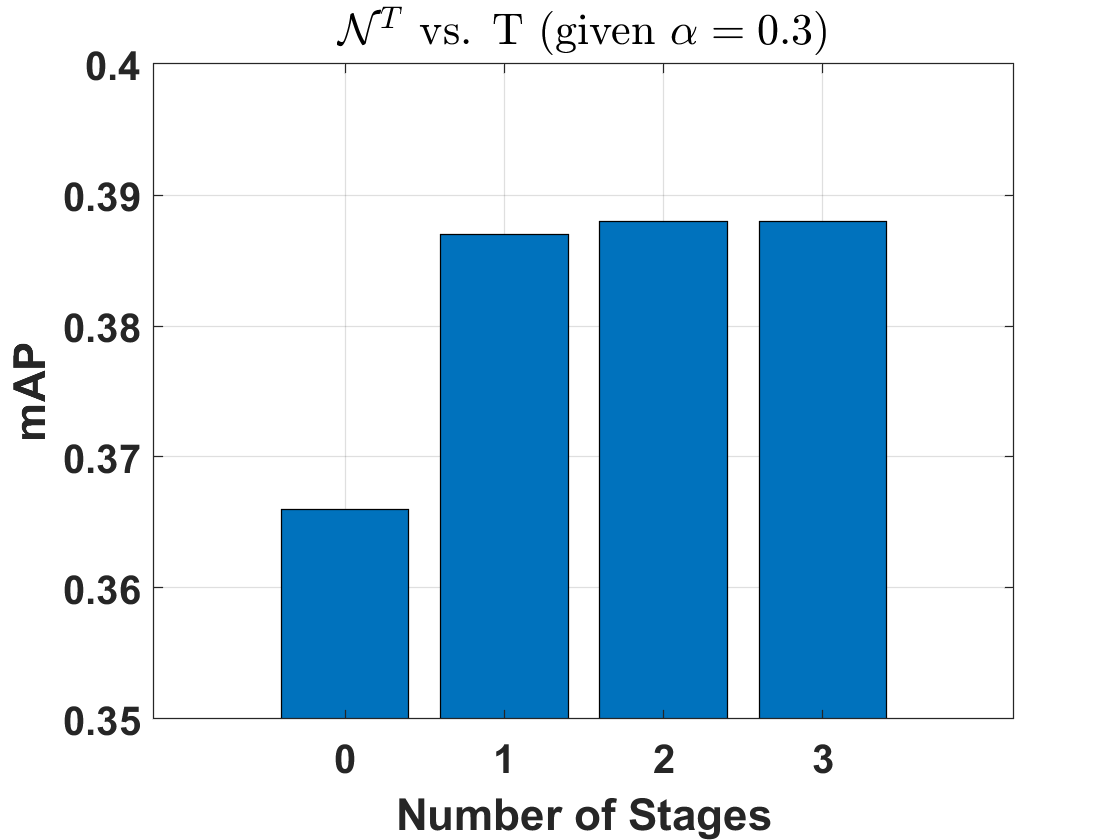} \hspace{-5mm}
      \includegraphics[width=0.51\linewidth]{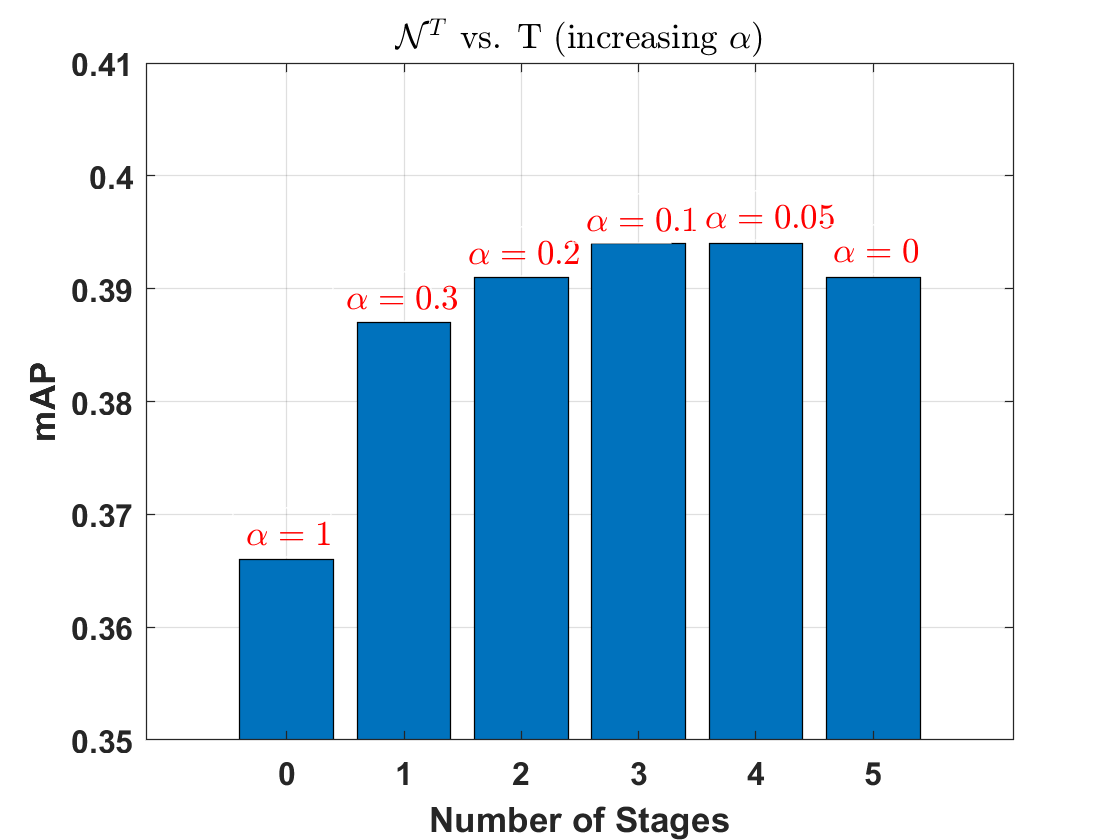}
    \caption{Performance of students as $T$ increases: (a) $\alpha_0=0.3$, (b) decreasing $\alpha_0$ as $T$ increases.}%
    \label{fig:Nt_alpha}
\end{figure}

\begin{table}[t!]
  \centering
    \resizebox{1.0\columnwidth}{!}{
    \begin{tabular}{|c|c|c|c|c|c|c|c}
    \toprule
    Stage(T) & Teach.  & $\alpha_1$, $\alpha_2$ & $\alpha_0$  & Stud. & Stud. Perf.  \\
    \midrule
    0 & - & - & 1.0 & $\mathcal{N}^0$ & 0.366 \\
    \midrule
    1 & $\mathcal{N}^0$, - & 0.7, - & 0.3 & $\mathcal{N}^1$ & 0.387 \\ 
    \midrule
    2 & $\mathcal{N}^0$, $\mathcal{N}^1$ & 0.3, 0.5 & 0.2 & $\mathcal{N}^2$ & 0.393 \\
    \midrule
    3 & $\mathcal{N}^1$, $\mathcal{N}^2$ & 0.4, 0.5 & 0.1 & $\mathcal{N}^3$ & 0.396 \\
   \midrule
   4 & $\mathcal{N}^2$, $\mathcal{N}^3$ & 0.45, 0.5 & 0.05 & $\mathcal{N}^4$ & 0.398 \\
   \midrule
   5 & $\mathcal{N}^3$, $\mathcal{N}^4$ & 0.45, 0.53 & 0.03 & $\mathcal{N}^5$ & 0.398 \\
   \bottomrule
    \end{tabular}%
    }
	\caption{$2$ teachers at each stage (weights: $\alpha_1$ and $\alpha_2$).} %$\N^0$ is the default teacher.}  
  \label{tab:multipleteach}%
\end{table}%

\noindent {\bf Multiple Stages and Multiple Teachers:}
Table \ref{tab:multipleteach} shows the performance as $T$ increases with $2$ teachers. 
Each row corresponds to one stage and $\N^0$ is as usual the default teacher. 
As expected, the mAP improves as $T$ increases. 
In particular, observe that after $5$ stages we reach $0.398$ mAP here, compared to the single teacher setup where we get $0.392$ after $5$ stages (refer to Figure \ref{fig:Nt_alpha}(b)), a $1.5\%$ relative improvement. 
We also see the saturation of performance as $T$ increases, further supporting our results from Section \ref{sec:analysis}.

\begin{table}[t!]
  \centering
    %\resizebox{\columnwidth}{!}{
    \begin{tabular}{c|c|c}
    \toprule
	Method & lwlrap & Remarks\\
    \toprule
    \cite{fonseca2019audio} & 0.312 & -\\
	\midrule 
	WEANET$^L$ - (T = 0) & 0.436 & -\\ 
	\midrule
	SUSTAIN - (T = 1) & 0.454 & $\alpha = 0.5$ \\
	SUSTAIN - (T = 2) & 0.456 & $\alpha = 0.3$ \\
	SUSTAIN - (T = 3) & 0.462 & $\alpha = 0.2$ \\
	SUSTAIN - (T = 4) & 0.470 & $\alpha = 0.15$ \\
	SUSTAIN - (T = 5) & 0.472 & $\alpha = 0.05$ \\
	\textbf{SUSTAIN - (T = 6)} & \textbf{0.472} & $\alpha = 0.05$ \\
	
	\bottomrule
    \end{tabular}%}
    \caption{Performance when trained on FSDKaggle-Noisy set. First row is baseline. Last column shows $\alpha$ for each stage. Single Teacher ($\mathcal{N}^{T-1}$)  at each  stage of training. }
  \label{tab:fsdnoisy}%
  %\vspace{-0.1in}
\end{table}%

\subsection{Noisy label vs Clean Label Conditions:}

We now try to specifically look into clean and noisy label learning conditions using FSDKaggle2019 dataset. 
The \emph{Noisy} and \emph{Curated} training sets of this dataset (refer to their descriptions from Section \ref{sec:exp-setup}) are used as noisy (i.e., hard) and clean (i.e., easy) learning conditions. 
The test set remains same for the two cases. 
We use a lighter version of WEANET model (WEANET$^L$) for these experiments, the details of which are available in the supplementary material. For all these experiments, only one teacher is used per stage; the network trained in the previous stage. Most prior works on FSDKaggle have relied heavily on different forms of data augmentation on the \emph{Curated} set for improved performance. We do not do any data augmentation and instead focus  on easy and hard conditions. We use the performance reported by the dataset paper, \cite{fonseca2019learning}, on each training set as the baseline. 

Table \ref{tab:fsdnoisy} summarizes the results for \emph{Noisy} training set. We observed that for the \emph{Noisy} training set, the trends of results (for different parameters such as $\alpha$s, stages etc.) are similar to those observed for Audioset.  Overall SUSTAIN leads to around $8\%$ improvement. Even just $1$ stage of SUSTAIN, leads to almost $4.1\%$ improvement in performance over base WEANET$^L$ (T = 0). The performance saturates after 5 stages of SUSTAIN.

The \emph{Curated} set presents a different picture though. While one stage of SUSTAIN still leads to small improvement, any further co-supervision leads to deterioration in performance. This is the expected behavior our technical results claimed in Section \ref{sec:analysis} i.e., SUSTAIN learning primarily helps in adverse learning conditions.

\begin{table}[t!]
  \centering
    \resizebox{0.95\columnwidth}{!}{
    \begin{tabular}{c|c|c}
    \toprule
	Method & lwlrap & Remarks\\
    \toprule
    Baseline - \cite{fonseca2019audio} & 0.542 & -\\
	\midrule 
	WEANET$^L$ - (T = 0) & 0.619 & -\\ 
	\midrule
	SUSTAIN - (T = 1) & 0.619 & $\alpha = 0.9$ \\
	SUSTAIN - (T = 1) & 0.625 & $\alpha = 0.7$ \\
	SUSTAIN - (T = 1) & 0.632 & $\alpha = 0.5$ \\
	SUSTAIN - (T = 1) & 0.622 & $\alpha = 0.3$ \\
	SUSTAIN - (T = 1) & 0.622 & $\alpha = 0.1$ \\
	\midrule
	SUSTAIN - (T = 2) & 0.624 & $\alpha = \{0.5,0.9\}$ \\
	SUSTAIN - (T = 2) & 0.623 & $\alpha = \{0.5,0.7\}$ \\
	SUSTAIN - (T = 2) & 0.627 & $\alpha = \{0.5,0.5\}$ \\
	SUSTAIN - (T = 2) & 0.624 & $\alpha = \{0.5,0.3\}$ \\
	SUSTAIN - (T = 2) & 0.625 & $\alpha = \{0.5,0.1\}$ \\
	
	\bottomrule
    \end{tabular}}
    \caption{Performance when trained on FSDKaggle-Curated set. Last row last column shows $\alpha$ for each stage. }
  \label{tab:fsdclean}%
\end{table}%

\subsection{Class-specific Performance Gains}
On the Audioset dataset, we observed that for almost $85\%$ of all the classes (527 total), the performance improved with SUSTAIN learned model $\mathcal{N}^4$ (from Table \ref{tab:multipleteach}), compared to base WEANET model ($\mathcal{N}^0$ from Table \ref{tab:multipleteach}). 
Most classes have under $25\%$ relative improvement, and $69$ of the classes get $>25\%$ improvement, and this reaches up to $100\%$ for classes like \emph{Squeal} and \emph{Rattle}. 
Maximum drop in performance (down by $30\%$) is observed for \emph{Gurgling} class. 
We also see that low performing classes (AP $< 0.1$) have more improvements in relative sense. 
On average, AP of these classes ($44$ of them) improve by $23\%$, while classes with high AP ($> 0.5$, $146$ in number), we see $6\%$ gain in performance. Class-specific performance plots are shown in supplementary material. Overall, \emph{Bagpipes} sounds are easiest to recognize and we achieve an AP of 0.931 for it. \emph{Squish} on the other hand is hardest to recognize with an AP of 0.02.

\section{Knowledge Transfer using SUSTAIN} \label{sec:transfer}

In the preceding sections, we showed that the generalizability of a model can be improved through the proposed SUSTAIN learning. We now ask, are the models obtained from SUSTAIN learning more suitable for transfer learning? 
We study whether WEANET obtained after $T$ stages of training ($\mathcal{N}^T$) is more suited for transfer learning compared to the one just trained on the available labels.
Since SUSTAIN is not explicitly designed to handle this, the transfer learning question reveals the learnability power of the proposed framework. 

We pick $\mathcal{N}^4$ and $\mathcal{N}^0$ WEANET models from Table \ref{tab:multipleteach} for this analysis, with  $\mathcal{N}^0$ being the base model trained only on available labels and $\mathcal{N}^4$ being a SUSTAIN trained model. These WEANET models trained on Audioset are used to obtain representations for the audio recordings in the given target tasks. 
Outputs after Block B5 (refer to WEANET model from Table \ref{tab:weanet}) are used as feature representations for the audio recordings. 
Recall that, Block B5 produces representations for $1$ second long audio every $0.33$ sec. These segment level representations are simply max-pooled across all segments to get a fixed $2048$-dimensional vector for all audio recordings. 

We study these transfer learning tasks on FSDKaggle and ESC-50 datasets. A simple linear classifier is trained on the feature representations obtained for the audio recordings. 

Table \ref{tab:transfer} shows the results for these transfer learning tasks. We see that the representations from SUSTAIN framework leads to significantly improved feature learning for all datasets. For the clean conditions (ESC-50 and FSDKaggle-Curated), we see 1.5-2.2\% improvement whereas for the noisy learning conditions we see up to 3.5\% improvement in performance.  For the ESC-50 dataset, this transfer learning also outperforms previous state-of-the-art results by a considerable margin (2.8\% relative). 

\begin{table}[t!]
  \centering
    \resizebox{\columnwidth}{!}{
    \begin{tabular}{cc|c|c}
    \toprule
    \multicolumn{2}{c|}{ESC-50} & \multicolumn{2}{c}{FSDKaggle}\\
    \midrule
    Method & Acc. (\%) & Method & lwlrap \\
    \midrule
    \cite{sailor2017unsupervised} & 86.5 & Noisy, WEANET ($\mathcal{N}^0$) & 0.486\\
    \cite{guzhov2020esresnet} & 91.5 & Noisy, WEANET ($\mathcal{N}^4$) & 0.503\\
    \cmidrule{3-4}
    WEANET ($\mathcal{N}^0$) & 92.6 & Curated, WEANET ($\mathcal{N}^0$) & 0.712 \\
    WEANET ($\mathcal{N}^4$) & 94.1 & Curated, WEANET ($\mathcal{N}^4$) & 0.728\\
	\bottomrule
    \end{tabular}}
    \caption{Transfer Learning from SUSTAIN Models trained on Audioset. Results on ESC-50 and FSDKaggle dataset.}
  \label{tab:transfer}%
\end{table}%

%%%%%%%%%%%%%%%%%%%%%%%%%%%%%%%%%%%%%%%%%%%%%%%%%%%%%%%%%%%%%%%%%%%%%%%%%%%%%%%

\section{Conclusions} \label{sec:conc}
%Weakly labeled datasets learning poses interesting challenges. 

Designing robust learning models for weakly labelled datasets while also scaling them to large scale and ensuring good generalization is a hard problem, and is an open question. 
We addressed this problem in this paper. We proposed a sequential self-teaching framework that utilizes co-supervision across trained models to improve generalization. 
We specifically show promising results on sound event recognition and detection, in particular in large scale weakly labelled settings. We also proposed a novel architecture for learning sounds which incorporates class-specific attention learning. A better theoretical understanding of the role $\alpha$ and $T$ play in different adverse learning conditions can lead to an enhanced understanding of SUSTAIN. We will explore these directions in future works. 

\bibliography{references}
\bibliographystyle{icml2020}

%%%%%%%%%%%%%%%%%%%%%%%%%%%%%%%%%%%%%%%%%%%%%%%%%%%%%%%%%%%%%%%%%%%%%%%%%%%%%%%

%\pagebreak
%\widetext
\clearpage
%%%%%%%%%% Merge with supplemental materials %%%%%%%%%%
%%%%%%%%%% Prefix a "S" to all equations, figures, tables and reset the counter %%%%%%%%%%
\setcounter{equation}{0}
\setcounter{figure}{0}
\setcounter{table}{0}
\setcounter{section}{0}
\setcounter{page}{1}
\makeatletter
\renewcommand{\theequation}{S\arabic{equation}}
\renewcommand{\thefigure}{S\arabic{figure}}
\renewcommand{\bibnumfmt}[1]{[S#1]}
\renewcommand{\citenumfont}[1]{S#1}
\renewcommand{\thetable}{S\arabic{table}}

\begin{center}
\textbf{\large Supplementary Materials}
\end{center}

\section{Technical Results}

\begin{align} \label{seq:crossent}
\L(\p^s, \y^s) &= \frac{1}{C}\sum_{c=1}^{C} \l(\p^s, \y^s) \hspace{2mm} \text{where} \\ 
\l(p^s_c, y^s_c) &= - y^s_c \log(p^s_c) - (1-y^s_c) \log(1-p^s_c)
\end{align}

\begin{equation} \label{seq:new-labels}
\by_t^s = \alpha_0 \y^s + \sum_{\tt=1}^t \alpha_\tt \hp_{\tt-1}^s \quad \text{\it s.t.} \quad \sum_{\tt=0}^{t} \alpha_\tt = 1 
\end{equation}

\begin{equation} \label{seq:new-labels-1-teacher}
\by_t^s = \alpha_0 \y^s + (1 - \alpha_0) \hp_{t-1}^s    
\end{equation}

\begin{equation} \label{seq:true-labels}
y^s_c = \begin{cases} 
y^{*s}_c & \text{\it w.p.} \hspace{2mm} \delta_c \\ 
1-y^{*s}_c & \text{\it else} \end{cases}
\end{equation}

\begin{equation} \label{seq:new-label-1-stage}
\by_1^s = \alpha_0 \y^s + (1-\alpha_0) \hp_0^s \\
\end{equation}

\begin{equation} \label{seq:one-teacher-delta-bar-def}
\hat{p}^s_{0,c} = \begin{cases} 
y^{*s}_c & \text{\it w.p.} \hspace{2mm} \bar{\delta_c} \\ 
1-y^{*s}_c & \text{\it else} \end{cases}
\end{equation}  

\begin{proposition} \label{sprop:delta-bar}
  Let $\N^1$ be trained using $\{\x^s, \by^s\} \;\forall \;s$ using binary cross-entropy loss, and let $\epsilon_c$ denote the average accuracy of $\N^0$ for class $c$.
  Then, we have
  \begin{equation} \label{seq:one-teacher-delta-bar-val}
    \bar{\delta}_c = \epsilon_c\delta + (1-\epsilon_c)(1-\delta) \; \forall \; c
  \end{equation}  
  and whenever $\delta < \frac{1}{2}$, $\N^1$ improves performance over $\N^0$. The per class performance gain is $(1-\epsilon_c)(1-2\delta)$ 
\end{proposition}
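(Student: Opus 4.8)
The plan is to prove the proposition in two stages: first I derive the closed form for $\bar{\delta}_c$ in Eq.~\ref{seq:one-teacher-delta-bar-val}, and then I show that the co-supervised target $\by_1^s$ of Eq.~\ref{seq:new-label-1-stage} is a strictly better training signal than the raw labels $\y^s$ exactly when $\delta < \frac{1}{2}$.

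For the closed form, I would reason purely from the binary structure of the labels. The accuracy $\epsilon_c$ is measured against the \emph{observed} labels (the only labels $\N^0$ is ever trained on), so $\hat p^s_{0,c} = y^s_c$ with probability $\epsilon_c$, while Eq.~\ref{seq:true-labels} gives $y^s_c = y^{*s}_c$ with probability $\delta$. Because all quantities are binary, the teacher matches the truth, $\hat p^s_{0,c} = y^{*s}_c$, in exactly two disjoint ways: the teacher agrees with an observed label that is itself correct, or the teacher disagrees with an observed label that is itself wrong (a double flip). Treating the teacher's deviation from the observed label as independent of the label noise, total probability gives $\bar{\delta}_c = \epsilon_c\delta + (1-\epsilon_c)(1-\delta)$. \textbf{This independence hypothesis is the first delicate point}: it is a modeling assumption rather than a derivable fact, and I would state it explicitly.

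For the improvement claim, the structural fact I would exploit is that the binary cross-entropy loss of Eq.~\ref{seq:crossent} is \emph{affine} in its label argument. Hence the per-example loss at the convex target splits exactly as $\l(p^s_c,\bar y^s_c) = \alpha_0\,\l(p^s_c,y^s_c) + (1-\alpha_0)\,\l(p^s_c,\hat p^s_{0,c})$. Taking expectations over the two noise models, by Eq.~\ref{seq:true-labels} the expected loss of the raw target is the mixture $\delta\,\l(p^s_c,y^{*s}_c) + (1-\delta)\,\l(p^s_c,1-y^{*s}_c)$, and by Eq.~\ref{seq:one-teacher-delta-bar-def} the teacher term is the analogous mixture with weight $\bar{\delta}_c$. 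Combining, the effective weight on the \emph{correct-label} loss $\l(p^s_c,y^{*s}_c)$ is $\delta$ for the raw target but $\alpha_0\delta + (1-\alpha_0)\bar{\delta}_c$ for the co-supervised target.

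The concluding step is to argue that placing more weight on the correct-label loss yields a better model: for binary cross-entropy the minimizer over $p$ of $w\,\l(p,1) + (1-w)\,\l(p,0)$ is $p=w$, so as the effective correctness weight $w$ increases the optimal prediction moves monotonically toward the true label. It therefore suffices to compare $\alpha_0\delta + (1-\alpha_0)\bar{\delta}_c$ with $\delta$, i.e. to check $\bar{\delta}_c > \delta$; substituting the closed form and simplifying gives $\bar{\delta}_c - \delta = (1-\epsilon_c)(1-2\delta)$, which is positive precisely when $\delta < \frac{1}{2}$ and yields both the improvement and the stated per-class gain. \textbf{The main obstacle is this last bridge}: turning ``the target is more correct on average'' into ``$\N^1$ is strictly more accurate than $\N^0$'' presumes the trained networks approximately attain their population minimizers and that accuracy is monotone in the effective correctness weight. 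I would make the minimizer computation explicit to justify the monotonicity and record the realizability/optimization idealization as the governing assumption.
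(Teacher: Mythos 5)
Your proposal is correct and follows essentially the same route as the paper's proof: the affine split of the binary cross-entropy loss across the convex target, the mixture expectations under the two noise models, and the comparison of the effective correct-label weight $\alpha_0\delta + (1-\alpha_0)\bar{\delta}_c$ against $\delta$, yielding the condition $\delta < \frac{1}{2}$ and the gain $(1-\epsilon_c)(1-2\delta)$. The two points you flag as delicate --- the independence of the teacher's error from the label noise behind the formula for $\bar{\delta}_c$, and the minimizer-monotonicity bridge from ``more correct targets on average'' to ``better trained model'' --- are precisely the steps the paper asserts without justification (``we can see that'', ``we can ensure that''), so your version is, if anything, more explicit than the original.
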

\begin{proof}
Recall the entropy loss from Eq. \ref{seq:crossent}, for a given $s$ and $c$. 
Using the definition of the new label from Eq. \ref{seq:new-label-1-stage}, we get the following 
\begin{equation} \label{seq:prop-1}
    \l(p^s_c, \bar{y}^s_c) = \alpha_0 \l(p^s_c, y^s_c) + (1-\alpha_0) \l(p^s_c, \hat{p}^s_c)  
\end{equation}
Now, Eq. \ref{seq:true-labels} says that {\it w.p.} $\delta$ (recall $\delta_c = \delta \; \forall \; c$ here), $\l(p^s_c, y^s_c) = \l(p^s_c, y^{*s}_c)$, else $\l(p^s_c, y^s_c) = \l(p^s_c, 1-y^{*s}_c)$. 
Hence, using Eq. \ref{seq:true-labels} and Eq. \ref{seq:one-teacher-delta-bar-def}, and using the resulting equations in Eq. \ref{seq:prop-1} we have the following
\begin{align*}
    &\mathop{\mathbb{E}}_s \l(p^s_c, y^s_c) = \delta \sum_{s=1}^S \l(p^s_c, y^{*s}_c) + (1-\delta) \sum_{s=1}^S \l(p^s_c, 1-y^{*s}_c) \\
    &\mathop{\mathbb{E}}_s \l(p^s_c, \hat{p}^s_c) = \bar{\delta}_c \sum_{s=1}^S \l(p^s_c, y^{*s}_c) + (1-\bar{\delta}_c) \sum_{s=1}^S \l(p^s_c, 1-y^{*s}_c) \\
    &\mathop{\mathbb{E}}_s \l(p^s_c, \bar{y}^s_c) = (\alpha_0\delta + (1-\alpha_0)\bar{\delta}_c) \sum_{s=1}^S \l(p^s_c, y^{*s}_c) \\ &\quad + (\alpha_0(1-\delta) + (1-\alpha_0)(1-\bar{\delta}_c)) \sum_{s=1}^S \l(p^s_c, 1-y^{*s}_c)
\end{align*}
If $(\alpha_0\delta + (1-\alpha_0)\bar{\delta}_c) > \delta$ then we can ensure that using $\bar{y}^s_c$ as targets is better than using $y^s_c$. 
Now given the accuracy of $\N^0$ denoted by $\epsilon_c \; \forall \; c$, combining Eq. \ref{seq:true-labels} and Eq. \ref{seq:one-teacher-delta-bar-def}, we can see that $\bar{\delta}_c = \epsilon_c\delta + (1-\epsilon_c)(1-\delta)$. Using this, for $\N^1$ to be better than $\N^0$,  we need
\begin{equation} \label{seq:prop-2}
    \alpha_0\delta + (1-\alpha_0)(\epsilon_c\delta + (1-\epsilon_c)(1-\delta)) > \delta
\end{equation}
which requires $\delta < \frac{1}{2}$. And the gain is simply $\alpha_0\delta + (1-\alpha_0)\bar{\delta}_c - \delta$ which reduces to $(1-\epsilon_c)(1-2\delta)$. 
\end{proof}

\begin{corollary} \label{scor:optimal-T}
Let $\epsilon_c^t$ denote the accuracy of $\N^t$ for class $c$. 
Given some $\delta$, there exists an optimal $\bT^c$
%and a corresponding sequence of weights $\alpha_0,\ldots,\alpha_{\bT}$, 
such that $\epsilon_c^{\bT^c} \geq \epsilon_c^t$. 
\end{corollary}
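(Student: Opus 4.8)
The plan is to promote Proposition~\ref{sprop:delta-bar} from a single teacher--student step into a stage-wise recurrence and analyse the resulting scalar sequence. Fix the class $c$, and let $q_t$ denote the probability that the training target $\by_t^s$ agrees with the true label $y^{*s}_c$; by construction $q_0=\delta$, since $\N^0$ is trained directly on the observed labels $\y^s$. I would treat the performance $\epsilon_c^t$ of $\N^t$ as a monotone increasing function of the target quality $q_t$ --- a minimal learnability premise that Proposition~\ref{sprop:delta-bar} already quantifies for one step --- so that it suffices to control $\{q_t\}$. At stage $t$ the teacher is $\N^{t-1}$, and writing $\epsilon_c$ for the (stage-independent) accuracy with which the architecture reproduces its own training target, Proposition~\ref{sprop:delta-bar} gives the teacher's true accuracy as $\bar\delta_c^{t-1}=\epsilon_c q_{t-1}+(1-\epsilon_c)(1-q_{t-1})$. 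Since $\by_t^s=\alpha_0\y^s+(1-\alpha_0)\hp_{t-1}^s$, the target quality propagates as $q_t=\alpha_0\delta+(1-\alpha_0)\bar\delta_c^{t-1}$.

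Substituting the expression for $\bar\delta_c^{t-1}$ shows that $\{q_t\}$ obeys an affine recurrence $q_t=\lambda q_{t-1}+\mu$ with slope $\lambda=(1-\alpha_0)(2\epsilon_c-1)$ and offset $\mu=\alpha_0\delta+(1-\alpha_0)(1-\epsilon_c)$. For an informative teacher ($\epsilon_c\ge\tfrac12$) and $\alpha_0\in[0,1)$ one has $\lambda\in[0,1)$, so the update is a contraction on the invariant interval $[0,1]$ with a unique fixed point $q^\star=\mu/(1-\lambda)$; the closed form $q_t=q^\star+\lambda^{t}(q_0-q^\star)$ then gives monotone convergence to $q^\star$ at the geometric rate $\lambda$. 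This is exactly the diminishing-returns behaviour flagged in the remarks, since the per-stage gain $q_t-q_{t-1}=\lambda^{t-1}(q_1-q_0)$ contracts by the factor $\lambda$ at every stage.

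The direction of the monotonicity is pinned down by Proposition~\ref{sprop:delta-bar} through the first step: $q_1-q_0=(1-\alpha_0)(1-\epsilon_c)(1-2\delta)$, so $q_1>q_0$ precisely when $\delta<\tfrac12$. In the high-noise regime $\{q_t\}$, and hence $\{\epsilon_c^t\}$, is increasing and bounded above by its limit, so over the finite budget $t\in\{0,\dots,T\}$ the maximum is attained at $\bT^c=T$; in the low-noise regime $\delta>\tfrac12$ the sequence is decreasing and the maximum sits at $\bT^c=0$, recovering the remark that the teacher may already be optimal. In both cases the monotone bounded sequence $\{\epsilon_c^t\}$ attains its maximum at a well-defined stage $\bT^c$ with $\epsilon_c^{\bT^c}\ge\epsilon_c^t$ for all $t$, which is the assertion; a monotone-convergence argument extends the conclusion to the saturating $T\to\infty$ limit.

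The main obstacle is the modelling assumption that the fitting capacity $\epsilon_c$ is constant across stages, which is what keeps the recurrence affine with a fixed slope and therefore forces monotonicity. If the network reproduces cleaner or noisier targets with different fidelity, $\epsilon_c$ becomes stage-dependent, both $\lambda$ and $\mu$ drift with $t$, and monotonicity can fail, allowing an interior optimiser $\bT^c\in(0,T)$ of the kind visible in Figure~\ref{fig:Nt_alpha} and Table~\ref{tab:multipleteach}. In that regime the contraction argument degrades to the trivial observation that a finite set $\{\epsilon_c^t\}_{t=0}^{T}$ attains its maximum, and the substantive content --- locating $\bT^c$ as a function of $\delta$, $\alpha_0$ and the stage-wise capacities --- would require modelling how fitting accuracy responds to target quality, which I would isolate as the quantitative strengthening of this existence result.
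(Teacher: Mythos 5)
Your proposal is correct in substance and shares the skeleton of the paper's own proof --- iterate Proposition~\ref{sprop:delta-bar} stage by stage, track an evolving label-quality parameter, and case-split on $\delta$ versus $\tfrac12$ --- but your execution is genuinely different and considerably more explicit. The paper's proof is a two-case sketch: for $\delta > \tfrac12$ no student helps (it writes $\bT^c = 1$, evidently a typo for $0$, which your conclusion $\bT^c = 0$ matches), and for $\delta < \tfrac12$ it merely asserts that one can ``repeat the analysis'' with updated values of $\bar{\delta}_c$, declaring $\bT_c$ to be the stage at which $\bar{\delta}_c$ climbs above $\tfrac12$. You instead close that loop into an explicit affine recurrence $q_t = \lambda q_{t-1} + \mu$ with $\lambda = (1-\alpha_0)(2\epsilon_c - 1)$ and $\mu = \alpha_0\delta + (1-\alpha_0)(1-\epsilon_c)$, and obtain monotone geometric convergence to the fixed point $q^\star = \mu/(1-\lambda)$. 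This buys two things the paper does not have: a quantitative diminishing-returns statement (the per-stage gain contracts by the factor $\lambda$), and a sharper picture of termination --- your recurrence gives $q^\star - \tfrac12 = \alpha_0\,(\delta - \tfrac12)/(1-\lambda)$, so with $\alpha_0 > 0$ and $\delta < \tfrac12$ the target quality saturates \emph{strictly below} $\tfrac12$, meaning the paper's stopping event ($\bar{\delta}_c$ exceeding $\tfrac12$) never actually triggers under your model; the gains end by saturation, not by crossing. The price is the extra assumption you correctly flag: a stage-independent fitting accuracy $\epsilon_c$ (plus accuracy monotone in target quality), which pins the optimum to the boundary, $\bT^c = T$ or $\bT^c = 0$, whereas the paper's looser treatment of per-stage $\bar{\delta}_c$ leaves room for the interior optima observed empirically (Table~\ref{tab:multipleteach}) at the cost of not being a complete argument. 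Both proofs rest on the same unstated monotone link between target quality and learned accuracy; yours has the virtue of stating it.
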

\begin{proof}
When $\delta > \frac{1}{2}$, Eq. \ref{seq:prop-2} will not hold, and Proposition \ref{sprop:delta-bar} says that $\N^1$ is worse than $\N^0$. Hence $\bT^c = 1$. 
On the other hand, if $\delta < \frac{1}{2}$, then $\bar{\delta}_c > \delta$, and the performance improves. 
For the given $c$, one can repeat the analysis for next stages with different values of $\bar{\delta}_c$. 
$\bT_c$ is the stage $t$ where the corresponding $\bar{\delta}_c$ increases over $\frac{1}{2}$. 
%most probably at the extent of loss in accuracy for other classes. 
%
\end{proof}

\section{WEANET$^L$ for FSDKaggle-2019}
\begin{table}[t]
  \centering
  \resizebox{1.0\columnwidth}{!}{
    \begin{tabular}{c|c|c}
    \toprule
    \textbf{Stage}              & \textbf{Layers}                                                           &  \textbf{Output Size}   \\
    \midrule
    Input                       & Unless specified -- (S)tride = 1, (P)adding = 1 & $1 \times 1024 \times 64$ \\
    \midrule
    \multirow{3}{*}{Block B1}   & Conv: 64, $3 \times 3$      & $64 \times 1024 \times 64$ \\
                                & Conv: 64, $3 \times 3$     & $64 \times 1024 \times 64$\\
                                & Pool: $4 \times 4$ (S:4)                                                  & $64 \times 256 \times 16$ \\
    \midrule
	\multirow{3}{*}{Block B2}   & Conv: 128, $3 \times 3$     & $128 \times 256 \times 16$ \\
                                & Conv: 128, $3 \times 3$     & $128 \times 256 \times 16$ \\
                                & Pool: $2 \times 2$ (S:2)                                                  & $128 \times 128 \times 8$ \\ 
    \midrule
	\multirow{3}{*}{Block B3}   & Conv: 256, $3 \times 3$    & $256 \times 128 \times 8$ \\
                                & Conv: 256, $3 \times 3$     & $256 \times 128 \times 8$ \\
                                & Pool: $2 \times 2$ (S:2)                                                  & $256 \times 64 \times 4 $ \\       
    \midrule
    \multirow{3}{*}{Block B4}   & Conv: 256, $3 \times 3$     & $256\times 64 \times 4 $ \\
                                & Conv: 256, $3 \times 3$     & $256 \times 64 \times 4 $ \\
                                & Pool: $2 \times 2$ (S:2)                                                  & $256 \times 32 \times 2 $ \\ 
    \midrule
    Block B5                    & Conv: 512, $3 \times 2$ (P:0)   & $512 \times 30 \times 1 $ \\
	\midrule
	Block B6                    & Conv: C, $1 \times 1$  & $C \times 30 \times 1 $ \\
	\midrule
	$g()$	                        & Global Average Pooling       & $C \times 1 $  \\
	\bottomrule      
    \end{tabular}
    }
  \caption{Model architecture for $WEANET^L$ for FSDKaggle-2019 dataset: All convolutional layers (except B6) are followed by batch norm and ReLU; B6 is followed by sigmoid activation.}%
  \label{tab:cnnarchfsd}
\end{table}%

Table \ref{tab:cnnarchfsd} shows the $WEANET$ architecture used for experiments on FSDKaggle-2019 dataset. $WEANET^L$ is just a lighter version of the one shown in Table 1 in the main paper. To keep things simple, we also use a simpler parameter-free mapping function $g()$. We use global average pooling as $g()$, which takes an average of segment level outputs to produce recording level output.

\section{Class-wise performance for Audioset}
\textbf{Figure \ref{fig:classaud}} shows class-wise performance for different sound classes and the improvement obtained from the sequential self-teaching approach.  The blue bar shows performance obtained from base-model (a.k.a default teacher $\mathcal{N}^0$). The green or red bar shows the change in performance from SUSTAIN model (corresponding to $\mathcal{N}^4$) in Table 4 from main text. The classes have been sorted by change in performance, with maximum improvement for first bar in top plot and maximum reduction in \emph{Vibraphone} class in right most bar of bottommost plot. 

We see that classes such as \emph{Zing, Moo, Cattle, Owl, Yodeling} (first 5 bars in topmost plot), get an absolute improvement of up to $0.16$ to $0.19$ in MAP, leading to $40-60\%$ improvement in relative sense. As mentioned in the main text, there are few classes such as \emph{Mouse, Squeal, Rattle} for which performance improves by more than 100\%. Overall, \emph{Bagpipes} sounds are easiest to recognize and we achieve an AP of 0.931 for it. \emph{Squish} on the other hand is hardest to recognize with an AP of 0.02.

\begin{figure*}[t]
      \centering
      \includegraphics[width=0.9\linewidth]{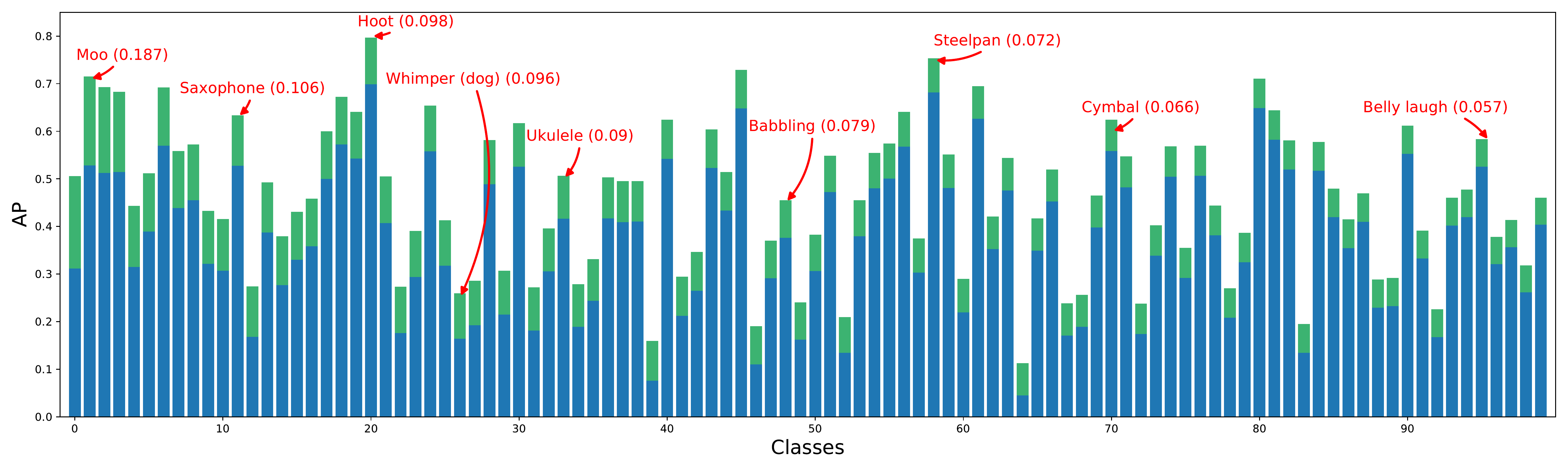}
      \includegraphics[width=0.9\linewidth]{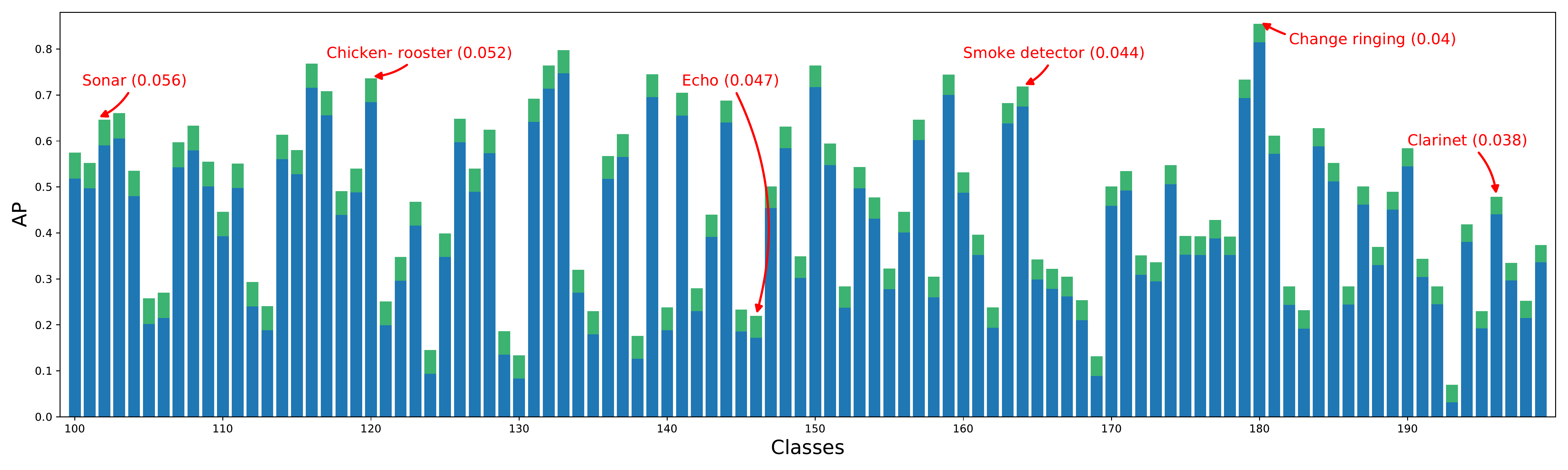}
      \includegraphics[width=0.9\linewidth]{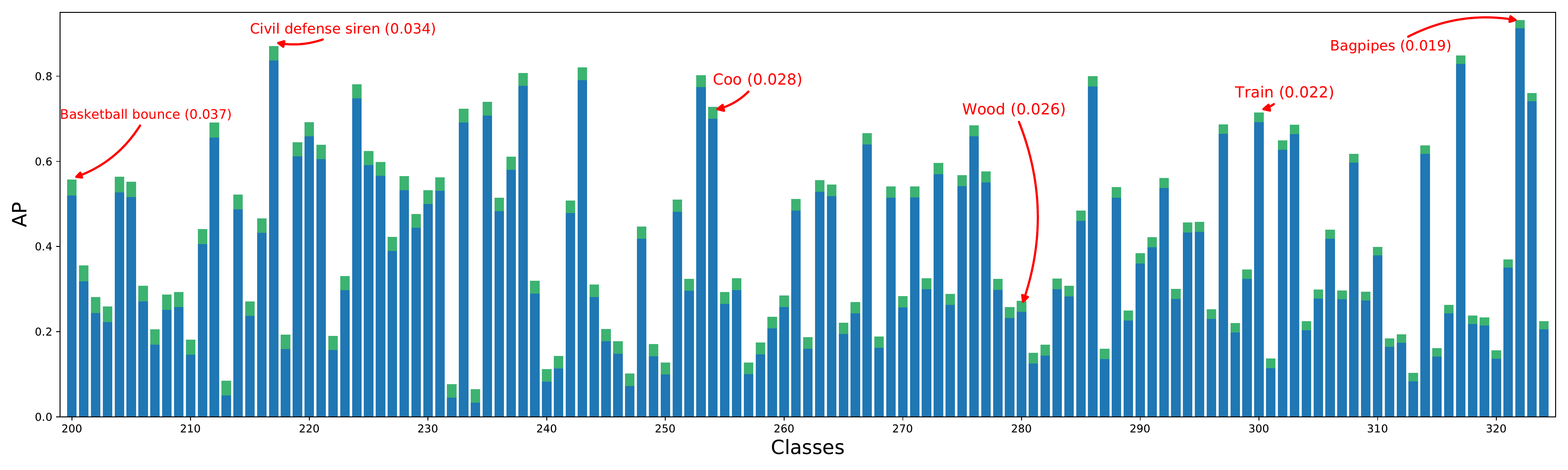}
      \includegraphics[width=0.9\linewidth]{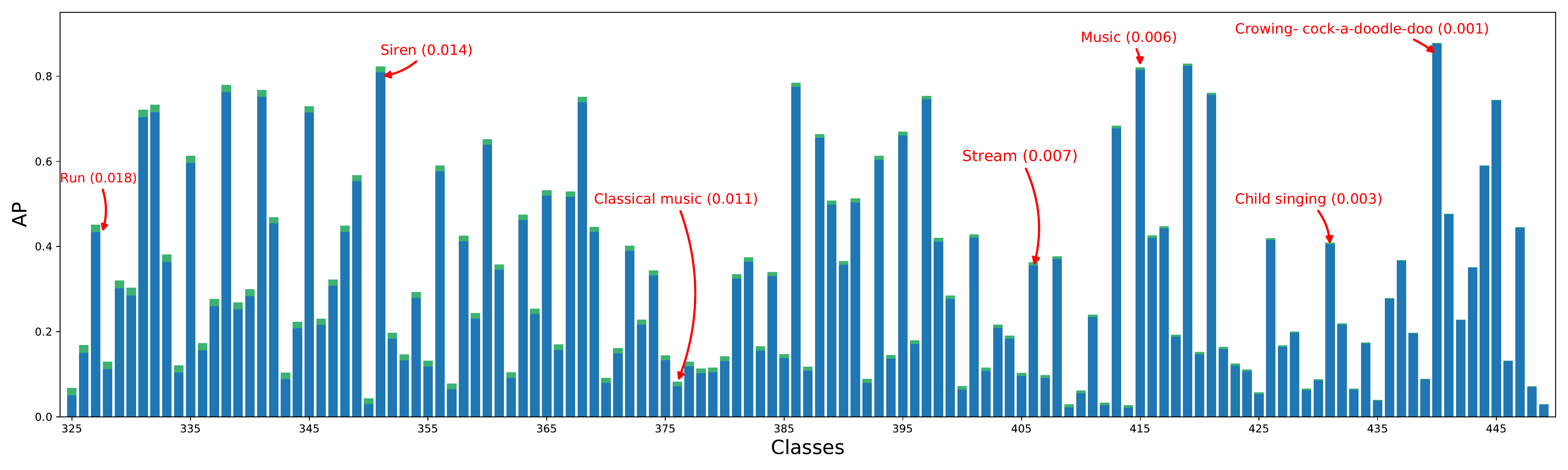}
      \includegraphics[width=0.9\linewidth]{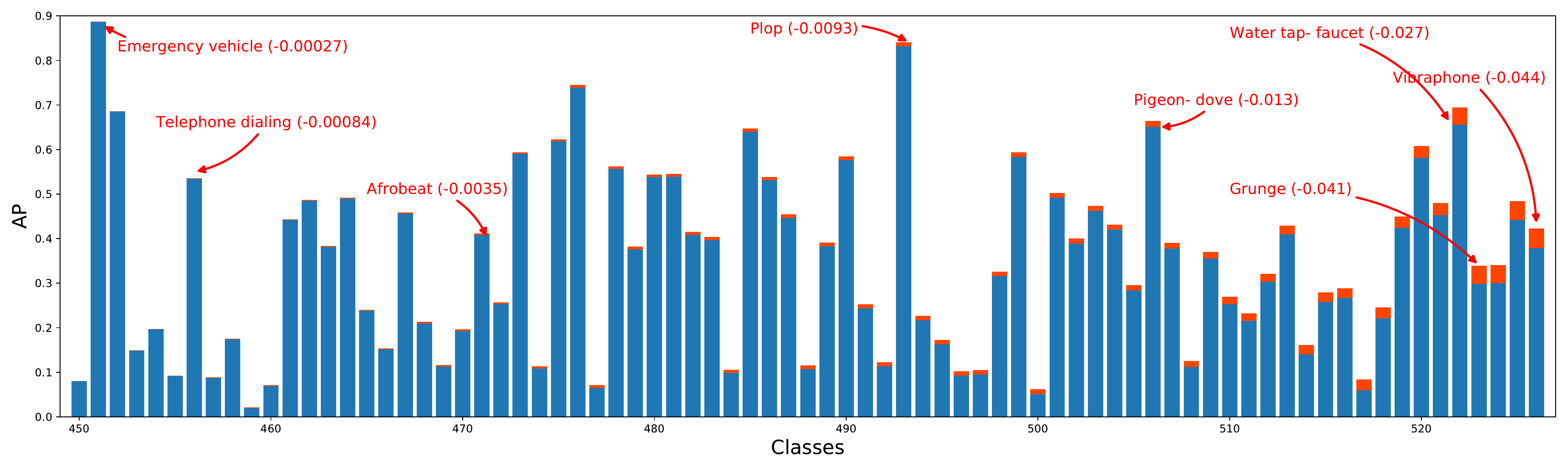}
    \caption{\textbf{Audioset} Class-wise AP and improvement in AP from SUSTAIN. The blue bar shows performance of $\mathcal{N}^0$, i.e. model trained only on available labels. The bar on top of each blue bar shows improvement (green) or deterioration (red) in performance from sequential teaching. Several classes (along with \textbf{absolute change} in performance ) have been annotated to bring out noteworthy observations. }%
    \label{fig:classaud}
\end{figure*}

\end{document}